\newcounter{NN}
\newtheorem{proposition}[NN]{Proposition}
\newtheorem{theorem}[NN]{Theorem}
\newtheorem{corollary}[NN]{Corollary}
\newtheorem{definition}[NN]{Definition}
\newtheorem{example}[NN]{Example}
\def\adj#1{\mathrm{adj}(#1)}
\def\A{\mathbf{A}}
\def\B{\mathbf{B}}
\def\Z{\mathbf{Z}}
\def\C{\mathbf{C}}
\def\I{\mathbf{1}}
\def\v{\mathbf{v}}
\def\w{\mathbf{w}}
\def\x{\mathbf{x}}
\def\P{\mathbf{P}}
\begin{document}

\title{Trees and superintegrable Lotka-Volterra families}
\author{
Peter H.~van der Kamp, G.R.W. Quispel and David I. McLaren\\[2mm]
Department of Mathematical and Physical Sciences,\\
La Trobe University, Victoria 3086, Australia.\\[2mm]
Corresponding author: Peter H.~van der Kamp\\
Email: P.vanderKamp@LaTrobe.edu.au\\[7mm]
}

\maketitle

\begin{abstract}
To any tree on $n$ vertices we associate an $n$-dimensional Lotka-Volterra system with $3n-2$ parameters and, for generic values of the parameters, prove it is superintegrable, i.e. it admits $n-1$ functionally independent integrals. We also show how each system can be reduced to an ($n-1$)-dimensional system which is superintegrable and solvable by quadratures.
\end{abstract}

\section{Introduction}
This paper is concerned with {\em trees}, {\em Lotka-Volterra systems}, {\em superintegrability}, and {\em Darboux Polynomials}. We first define these notions and try to indicate why the reader perhaps might care.

Lotka-Volterra (LV) systems are quadratic ordinary differential equations (ODEs) of the form
\begin{equation}\label{LVsys}
\frac{dx_i}{dt} = x_i\left(b_i + \sum_{j=1}^{n}A_{ij}x_j\right),\quad i=1,\dots,n \quad ,
\end{equation}
where the vector $\mathbf{b}$ and the matrix $\A$ do not depend on $\mathbf{x}$ or $t$. LV equations were introduced to describe predator-prey systems \cite{Lot,Vol}, and have more recently been used e.g. to study infectious disease \cite{HolPic}, and competition between species \cite{May}. The papers \cite{Bog, CD, Dam, Kolm, KQV, Maier, Pla, Smale, ProcA}, and many others, give some indication of the scientific interest generated by Lotka-Volterra equations, their properties, and their generalisations.

A Darboux Polynomial (DP) for an ODE $\frac{dx}{dt}=f(x)$ is a polynomial $P(x)$ for which there exists another polynomial $C(x)$ (called the cofactor) such that
\begin{equation}\label{DPdef}
\frac{dP}{dt} :=  \nabla P \cdot f = CP
\end{equation}
Comparing eqns (\ref{LVsys}) and (\ref{DPdef}) we see immediately that each LV system has at least $n$ DPs (i.e. the $x_i, \, i=1,\dots,n$). From the theory of Darboux Polynomials \cite{Darboux,Goriely} it follows that if a homogeneous quadratic ODE possesses $n+1$ DPs, it must have a factorisable integral (whose factors are powers of DPs). One might say that LV systems are on the verge of possessing one or more first integrals.\footnote{We refer to e.g. \cite[Theorem 3.1]{Zha} for a general result on the relationship between the existence of a sufficient number of Darboux polynomials and the existence of a first integral obtained from them.} In \cite{QTMK} we constructed families of homogeneous (i.e. with $\mathbf{b}=\mathbf{0}$) $n$-dimensional LV systems that possess $n-1$ additional DPs which give rise to $n-1$ first integrals. In this paper we prove that these integrals are functionally independent, and hence that these systems are superintegrable. A complementary study was carried out in \cite{LRR}, where conditions were given for systems \eqref{LVsys} with one additional DP (invariant hyperplane) to be superintegrable.\footnote{The condition is that the ranks of certain matrices are low \cite[Theorems 2,5]{LRR}. The statement in \cite[Theorem 3, condition (12)]{LRR} should read \cite{Lli}: Moreover, if rank$(B_1)=$rank$(B_2)=$rank$(B_3)=3\leq N$, then differential system (3) is completely integrable.} We note that while in this paper we consider homogeneous system only, the results carry over to the case where $\mathbf{b}=b\I$ is a constant vector, as such an LV-system can be transformed into one with $\mathbf{b}=\mathbf{0}$ \cite{Maier}. Homogeneous DPs are preserved under the inverse transformation, however, they can give rise to time-dependent integrals.
 
A tree is a connected graph which does not contain a cycle. In the next section, we start from an undirected tree on $n$ vertices with $n-1$ edges, and extend it to a weighted complete directed graph (digraph), which includes both loops and multiple edges. We then use the adjacency matrix of this digraph to define a corresponding LV system. In section 3, we establish some properties of the minors of the adjacency matrix, which are used in section 4 to derive $n-2$ three-point integrals, and to factorise the exponents in the integrals for the LV systems (which were given in \cite{QTMK}). In section 5, we introduce so-called edge variables, in terms of which the $n-2$ three-point integrals become two-point integrals, and we derive a related $(n-1)$-dimensional system which preserves the same integrals. This system is shown to be solvable by quadratures (Theorem \ref{nm1}). Finally, in section 6, we utilise the edge-variables and the 2-point integrals and to prove functional independence for $n-1$ integrals of the tree-system, i.e. that tree-systems are superintegrable (Theorem \ref{tfii}).

The number of non-isomorphic trees on $n$ vertices \cite[A000055]{OEIS} is quite large, asymptotically equal to $\beta\cdot\alpha^n\cdot n^{-5/2}$, where $\alpha\approx 2.955765, \beta\approx 0.5349496$ are Otter's tree-constants \cite[A051491]{OEIS}, \cite[A086308]{OEIS}, \cite{Otter}. We establish the existence of an equal number of superintegrable $n$-dimensional LV systems with $3n-2$ parameters.

\section{From tree to Lotka-Volterra system}
Let $T$ be a tree on $n$ vertices, $1,2,\ldots,n$. Edges $e_{1}, \ldots,e_{n-1}$ are represented by ordered pairs of integers,  i.e. we have $e_i=(u_i,v_i)\in \mathbb{N}^2$ with $u_i<v_i$. Because of the ordering the tree becomes a directed tree.

\begin{definition}
We associate a weighted digraph 
$G$ to $T$ as follows. For all $1\leq i\leq n$, we add a loop with weight $a_i$ to vertex $i$. We add weights $b_i$ to edges $e_i=(u_i,v_i)$, and add reciprocal edges $e'_i=(v_i,u_i)$ which carry weights $c_i$. The adjacency matrix of $G$ is the $n\times n$ matrix $\A$ with $A_{i,i}=a_i$, and for each edge $e_i=(u_i,v_i)$ of $T$: $A_{u_i,v_i}=b_i$ and $A_{v_i,u_i}=c_i$, and with all other entries 0.
\end{definition}

\begin{figure}[h]
\centering
\begin{subfigure}{0.4\textwidth}
\begin{tikzpicture}[scale=.85]
\tikzset{vertex/.style = {shape=circle,draw,minimum size=1.5em}}
\tikzset{edge/.style = {->}}				
\node[shape=circle,draw=black,line width=0.5mm] (1) at (0,4) {1};
\node[shape=circle,draw=black,line width=0.5mm] (2) at (2,4) {2};
\node[shape=circle,draw=black,line width=0.5mm] (3) at (4,4) {3};
\node[shape=circle,draw=black,line width=0.5mm] (4) at (6,4) {4};
\node[shape=circle,draw=black,line width=0.5mm] (5) at (2,0) {5};
\node[shape=circle,draw=black,line width=0.5mm] (6) at (2,2) {6};
\node[shape=circle,draw=black,line width=0.5mm] (7) at (4,6) {7};
\node[shape=circle,draw=black,line width=0.5mm] (8) at (2,6) {8};
\draw[edge] (1) to node[above] {$e_1$} (2);
\draw[edge] (2) to node[above] {$e_3$} (3);
\draw[edge] (3) to node[above] {$e_5$} (4);
\draw[edge] (2) to node[left] {$e_3$} (6);
\draw[edge] (5) to node[left] {$e_7$} (6);
\draw[edge] (2) to node[left] {$e_4$} (8);
\draw[edge] (3) to node[left] {$e_6$} (7);
\end{tikzpicture}
\caption{A (directed) tree on 8 vertices with labeled edges.}
\end{subfigure}
\qquad
\begin{subfigure}{0.4\textwidth}
\begin{tikzpicture}[scale=.75]
\tikzset{vertex/.style = {shape=circle,draw,minimum size=1.5em}}
\tikzset{edge/.style = {->,> = latex'}}				
\node[shape=circle,draw=black,line width=0.5mm] (1) at (0,4) {1};
\node[shape=circle,draw=black,line width=0.5mm] (2) at (2,4) {2};
\node[shape=circle,draw=black,line width=0.5mm] (3) at (4,4) {3};
\node[shape=circle,draw=black,line width=0.5mm] (4) at (6,4) {4};
\node[shape=circle,draw=black,line width=0.5mm] (5) at (2,0) {5};
\node[shape=circle,draw=black,line width=0.5mm] (6) at (2,2) {6};
\node[shape=circle,draw=black,line width=0.5mm] (7) at (4,6) {7};
\node[shape=circle,draw=black,line width=0.5mm] (8) at (2,6) {8};
\draw[edge] (1.10) to node[above] {$b_1$} (2.170);
\draw[edge] (2.190) to node[below] {$c_1$} (1.350);	
\draw[edge] (2.10) to node[above] {$b_2$} (3.170);
\draw[edge] (3.190) to node[below] {$c_2$} (2.350);
\draw[edge] (2.260) to node[left] {$b_3$} (6.100);
\draw[edge] (6.80) to node[right] {$c_3$} (2.280);
\draw[edge] (2.100) to node[left] {$b_4$} (8.260);
\draw[edge] (8.280) to node[right] {$c_4$} (2.80);	
\draw[edge] (3.10) to node[above] {$b_5$} (4.170);
\draw[edge] (4.190) to node[below] {$c_5$} (3.350);
\draw[edge] (3.100) to node[left] {$b_6$} (7.260);
\draw[edge] (7.280) to node[right] {$c_6$} (3.80);	
\draw[edge] (5.100) to node[left] {$b_7$} (6.260);
\draw[edge] (6.280) to node[right] {$c_7$} (5.80);
\draw[edge] (1) to[in=160,out=200,looseness=5] node[midway,left] {$a_1$} (1);
\draw[edge] (2) to[in=115,out=155,looseness=5] node[midway,above left] {$a_2$} (2);
\draw[edge] (3) to[in=290,out=250,looseness=5] node[midway,below] {$a_3$} (3);
\draw[edge] (4) to[in=20,out=340,looseness=5] node[midway,right] {$a_4$} (4);
\draw[edge] (5) to[in=290,out=250,looseness=5] node[midway,below] {$a_5$} (5);
\draw[edge] (6) to[in=20,out=340,looseness=5] node[midway,right] {$a_6$} (6);
\draw[edge] (7) to[in=70,out=110,looseness=5] node[midway,above] {$a_7$} (7);
\draw[edge] (8) to[in=70,out=110,looseness=5] node[midway,above] {$a_8$} (8);	
\end{tikzpicture}
\caption{The associated weighted digraph.}
\end{subfigure}
\caption{Associating a weighted digraph to a tree. Added edges $(i,i)$ carry weights $a_i$, weights $b_i$ are connected to edges $e_i=(u_i,v_i)$ where $u_i<v_i$ and weights $c_i$ are connected to opposite edges $(v_i,u_i)$ with $v_i>u_i$.}
\label{assG}
\end{figure}
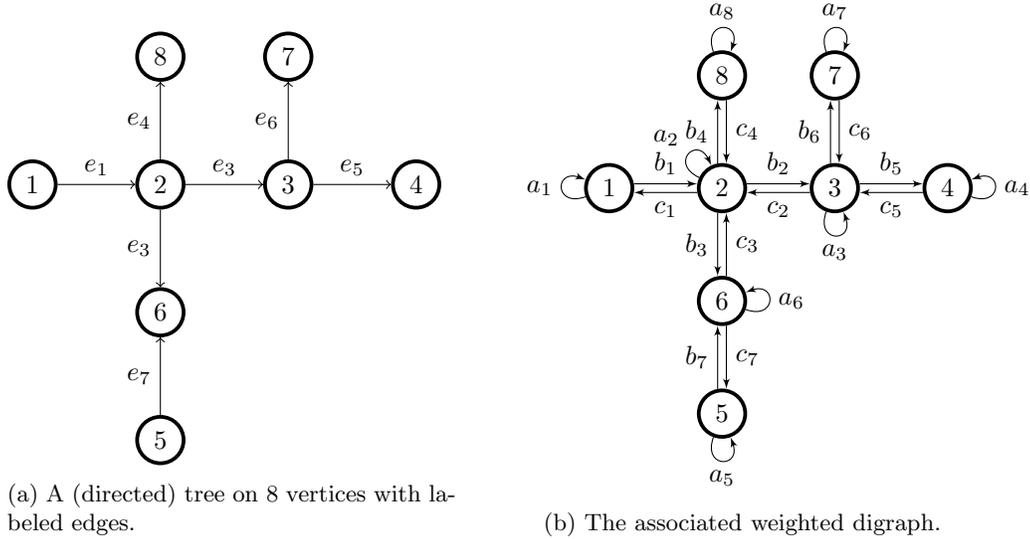

\begin{example} \label{exaTr}
To the tree on 8 vertices with edges
\[
e_1=(1,2),\ e_2=(2,3),\ e_3=(2,6),\ e_4=(2,8),\ e_5=(3,4),\ e_6=(3,7),\ e_7=(5,6),
\]
see Figure \ref{assG}(a), we associate the weighted digraph in Figure \ref{assG}(b).
The adjacency matrix of the associated weighted digraph is
\[
\begin{pmatrix}
a_{1} & b_{1} & 0 & 0 & 0 & 0 & 0 & 0
\\
 c_{1} & a_{2} & b_{2} & 0 & 0 & b_{3} & 0 & b_{4}
\\
 0 & c_{2} & a_{3} & b_{5} & 0 & 0 & b_{6} & 0
\\
 0 & 0 & c_{5} & a_{4} & 0 & 0 & 0 & 0
\\
 0 & 0 & 0 & 0 & a_{5} & b_{7} & 0 & 0
\\
 0 & c_{3} & 0 & 0 & c_{7} & a_{6} & 0 & 0
\\
 0 & 0 & c_{6} & 0 & 0 & 0 & a_{7} & 0
\\
 0 & c_{4} & 0 & 0 & 0 & 0 & 0 & a_{8}
\end{pmatrix}.
\]
\end{example}

\begin{definition} \label{dacdg}
We next associate a weighted complete digraph $\overline{G}$ to $T$. We add to the associated weighted digraph $G$ all edges $(x,z)$ which are not already in $G$. To assign their weight, we consider the shortest path in $T$ from $x$ to $z$ (this exists because $T$ is connected, and is uniquely given because $T$ does not contain loops). Let the penultimate vertex be $y$. Then there is an $i$ such that $(y,z)$ or $(z,y)$ is the $i$-th edge of $T$. If $y<z$ we assign to edge $(x,z)$ weight $b_i$, if $y>z$ we assign weight $c_i$ to edge $(x,z)$.
\end{definition}

In Figure \ref{aetassG} we add edges to the associated digraph from Figure \ref{assG}, which are directed to vertices 1 and 2. The complete digraph can be obtained by adding the remaining edges directed respectively to vertices $3,4,\ldots,8$.

\begin{figure}[h]
\begin{center}
\begin{tikzpicture}
\tikzset{vertex/.style = {shape=circle,draw,minimum size=1.5em}}
\tikzset{edge/.style = {->,> = latex'}}				
\node[shape=circle,draw=black,line width=0.5mm] (1) at (0,4) {1};
\node[shape=circle,draw=black,line width=0.5mm] (2) at (2,4) {2};
\node[shape=circle,draw=black,line width=0.5mm] (3) at (4,4) {3};
\node[shape=circle,draw=black,line width=0.5mm] (4) at (6,4) {4};
\node[shape=circle,draw=black,line width=0.5mm] (5) at (2,0) {5};
\node[shape=circle,draw=black,line width=0.5mm] (6) at (2,2) {6};
\node[shape=circle,draw=black,line width=0.5mm] (7) at (4,6) {7};
\node[shape=circle,draw=black,line width=0.5mm] (8) at (2,6) {8};
\draw[edge] (1.10) to node[above] {$b_1$} (2.170);
\draw[edge] (2.190) to node[below] {$c_1$} (1.350);	
\draw[edge] (2.10) to node[above] {$b_2$} (3.170);
\draw[edge] (3.190) to node[below] {$c_2$} (2.350);
\draw[edge] (2.260) to node[left] {$b_3$} (6.100);
\draw[edge] (6.80) to node[right] {$c_3$} (2.280);
\draw[edge] (2.100) to node[left] {$b_4$} (8.260);
\draw[edge] (8.280) to node[right] {$c_4$} (2.80);	
\draw[edge] (3.10) to node[above] {$b_5$} (4.170);
\draw[edge] (4.190) to node[below] {$c_5$} (3.350);
\draw[edge] (3.100) to node[left] {$b_6$} (7.260);
\draw[edge] (7.280) to node[right] {$c_6$} (3.80);	
\draw[edge] (5.100) to node[left] {$b_7$} (6.260);
\draw[edge] (6.280) to node[right] {$c_7$} (5.80);
\draw[edge] (1) to[in=160,out=200,looseness=5] node[midway,left] {$a_1$} (1);
\draw[edge] (2) to[in=115,out=155,looseness=5] node[midway,above left] {$a_2$} (2);
\draw[edge] (3) to[in=290,out=250,looseness=5] node[midway,below] {$a_3$} (3);
\draw[edge] (4) to[in=20,out=340,looseness=5] node[midway,right] {$a_4$} (4);
\draw[edge] (5) to[in=290,out=250,looseness=5] node[midway,below] {$a_5$} (5);
\draw[edge] (6) to[in=20,out=340,looseness=5] node[midway,right] {$a_6$} (6);
\draw[edge] (7) to[in=70,out=110,looseness=5] node[midway,above] {$a_7$} (7);
\draw[edge] (8) to[in=70,out=110,looseness=5] node[midway,above] {$a_8$} (8);
\tikzset{edge/.style = {->,densely dashed,> = latex'}}
\draw[edge] (8) to node[above left] {$c_1$} (1);
\draw[edge] (6) to node[below left] {$c_1$} (1);
\draw[edge] (5)[in=260,out=145,looseness=.8] to node[below left] {$c_1$} (1);
\draw[edge] (7)[in=80,out=120,looseness=1.5] to node[above left] {$c_1$} (1);
\draw[edge] (3)[in=330,out=210,looseness=1] to node[near start, below right] {$c_1$} (1);
\draw[edge] (4)[in=100,out=100,looseness=2] to node[above left] {$c_1$} (1);
\tikzset{edge/.style = {->,densely dotted,> = latex'}}
\draw[edge] (7) to node[near start, above left] {$c_2$} (2);
\draw[edge] (5)[in=300,out=30,looseness=1.5] to node[right] {$c_3$} (2);
\draw[edge] (4)[in=320,out=220,looseness=2] to node[below] {$c_2$} (2);
\end{tikzpicture}
\caption{\label{aetassG} Adding edges to the associated weighted digraph from Figure \ref{assG}(b), edges $(w,1)$ are dashed, and edges $(w,2)$ are dotted. To complete the graph, edges $(w,i)$ with $3\leq i\leq 8$ must be added.}
\end{center}
\end{figure}
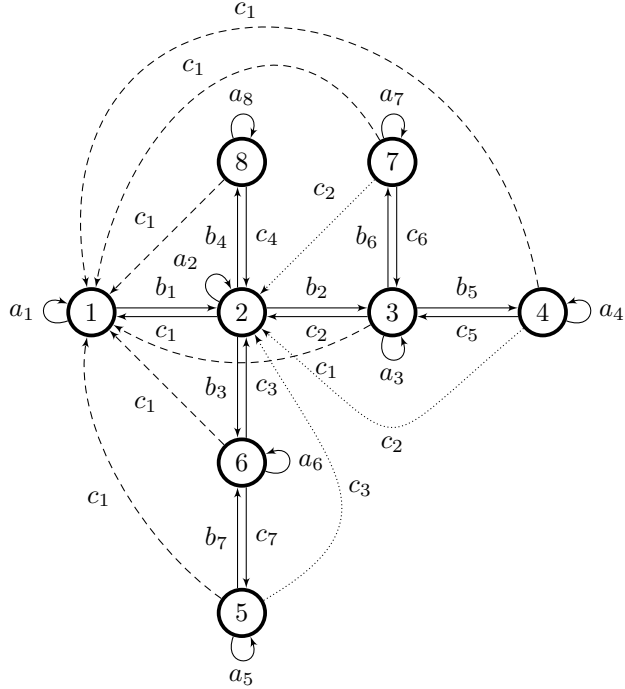

\begin{example} \label{exaTr2}
The adjacency matrix of the associated weighted complete digraph $\overline{G}$ to the  tree in Example \ref{exaTr} is
\begin{equation} \label{MA}
\A=\begin{pmatrix}
a_{{1}}&b_{{1}}&b_{{2}}&b_{{5}}&c_{{7}}&b_{{3}}&b_{{6}}&b_{{4}}\\
c_{{1}}&a_{{2}}&b_{{2}}&b_{{5}}&c_{{7}}&b_{{3}}&b_{{6}}&b_{{4}}\\
c_{{1}}&c_{{2}}&a_{{3}}&b_{{5}}&c_{{7}}&b_{{3}}&b_{{6}}&b_{{4}}\\
c_{{1}}&c_{{2}}&c_{{5}}&a_{{4}}&c_{{7}}&b_{{3}}&b_{{6}}&b_{{4}}\\
c_{{1}}&c_{{3}}&b_{{2}}&b_{{5}}&a_{{5}}&b_{{7}}&b_{{6}}&b_{{4}}\\
c_{{1}}&c_{{3}}&b_{{2}}&b_{{5}}&c_{{7}}&a_{{6}}&b_{{6}}&b_{{4}}\\
c_{{1}}&c_{{2}}&c_{{6}}&b_{{5}}&c_{{7}}&b_{{3}}&a_{{7}}&b_{{4}}\\
c_{{1}}&c_{{4}}&b_{{2}}&b_{{5}}&c_{{7}}&b_{{3}}&b_{{6}}&a_{{8}}
\end{pmatrix}.
\end{equation}
\end{example}

\begin{definition}[{\bf Tree-system}]
Let $T$ be a tree on $n$ vertices, and let $\A$ be the adjacency matrix of the associated weighted complete digraph of $T$. We define an $n$-dimensional homogeneous Lotka-Volterra system by
\begin{equation} \label{Tsy}
\frac{{\rm d} x_i}{{\rm d} t} =  x_i \sum_j A_{i,j} x_j,\qquad i=1,\ldots,n,
\end{equation}
to which we will refer as a tree-system, or a $T$-system if it is clear that $T$ is a tree.\footnote{Tree-systems are not to be confused with T-systems where T stands for Transfer matrix (or Toda or Tau).}
\end{definition}

\begin{example} \label{exaTr3}
The tree $T$ in Example \ref{exaTr} gives rise to the $T$-system
\begin{equation} \label{tsys}
\begin{split}
\dot{x_{{1}}}&=x_{{1}} \left( a_{{1}}x_{{1}}+b_{{1}}x_{{2}}+b_{{2}}x_{{3}}+b_{{5}}x_{{4}}
+c_{{7}}x_{{5}}+b_{{3}}x_{{6}}+b_{{6}}x_{{7}}+b_{{4}}x_{{8}}
 \right)\\
\dot{x_{{2}}}&=x_{{2}}\left(
c_{{1}}x_{{1}}+a_{{2}}x_{{2}}+b_{{2}}x_{{3}}+b_{{5}}x_{{4}}
+c_{{7}}x_{{5}}+b_{{3}}x_{{6}}+b_{{6}}x_{{7}}+b_{{4}}x_{{8}}
 \right)\\
\dot{x_{{3}}}&=x_{{3}} \left(
c_{{1}}x_{{1}}+c_{{2}}x_{{2}}+a_{{3}}x_{{3}}+b_{{5}}x_{{4}}
+c_{{7}}x_{{5}}+b_{{3}}x_{{6}}+b_{{6}}x_{{7}}+b_{{4}}x_{{8}}
\right)\\
\dot{x_{{4}}}&=x_{{4}} \left(c_{{1}}x_{{1}}+c_{{2}}x_{{2}}+c_{{5}}x_{{3}}+a_{{4}}x_{{4}}
+c_{{7}}x_{{5}}+b_{{3}}x_{{6}}+b_{{6}}x_{{7}}+b_{{4}}x_{{8}}
\right)\\
\dot{x_{{5}}}&=x_{{5}} \left(
c_{{1}}x_{{1}}+c_{{3}}x_{{2}}+b_{{2}}x_{{3}}+b_{{5}}x_{{4}}
+a_{{5}}x_{{5}}+b_{{7}}x_{{6}}+b_{{6}}x_{{7}}+b_{{4}}x_{{8}} \right)\\
\dot{x_{{6}}}&=x_{{6}} \left(
c_{{1}}x_{{1}}+c_{{3}}x_{{2}}+b_{{2}}x_{{3}}+b_{{5}}x_{{4}}
+c_{{7}}x_{{5}}+ a_{{6}}x_{{6}}+b_{{6}}x_{{7}}+b_{{4}}x_{{8}} \right)\\
\dot{x_{{7}}}&=x_{{7}} \left(
c_{{1}}x_{{1}}+c_{{2}}x_{{2}}+c_{{6}}x_{{3}}+b_{{5}}x_{{4}}
+c_{{7}}x_{{5}}+b_{{3}}x_{{6}}+ a_{{7}}x_{{7}}+b_{{4}}x_{{8}} \right)\\
\dot{x_{{8}}}&=x_{{8}} \left(
c_{{1}}x_{{1}}+c_{{4}}x_{{2}}+b_{{2}}x_{{3}}+b_{{5}}x_{{4}}
+c_{{7}}x_{{5}}+b_{{3}}x_{{6}}+b_{{6}}x_{{7}}+ a_{{8}}x_{{8}}
\right).
\end{split}
\end{equation}
\end{example}

\section{Properties of associated weighted complete digraphs}
The associated weighted complete digraph of a tree has the following nice property. \begin{proposition}
Let $T$ be a tree, and let $\bar{G}$ be its associated weighted digraph. Let $(u,v)$ be an edge in $T$. For all $w\not\in\{u,v\}$, the edges $(u,w),(v,w)\in\bar{G}$ carry the same weight.
\end{proposition}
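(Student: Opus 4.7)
The plan is to show that for each edge $(u,v)$ of $T$ and each $w\notin\{u,v\}$, the penultimate vertex on the unique path in $T$ from $u$ to $w$ coincides with the penultimate vertex on the unique path from $v$ to $w$. Once this is established, the definition of the weight of an edge in $\bar{G}$ (Definition \ref{dacdg}) immediately yields that $(u,w)$ and $(v,w)$ are assigned the same weight, since the penultimate vertex determines both the index $i$ of the relevant tree edge incident to $w$ and whether the weight is $b_i$ or $c_i$.

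The key observation is the standard tree fact that removing the edge $(u,v)$ from $T$ disconnects $T$ into exactly two components $T_u$ (containing $u$) and $T_v$ (containing $v$). I would split into two symmetric cases according to which component contains $w$. If $w\in T_v$, then the unique $u$-$w$ path in $T$ must use the edge $(u,v)$ and therefore has the form $u,v,y_1,\ldots,y_k,w$, while the $v$-$w$ path is $v,y_1,\ldots,y_k,w$; the two paths share the same tail, so the penultimate vertex ($y_k$, or $v$ if $k=0$ and $w$ is a neighbour of $v$ in $T$) is identical. The case $w\in T_u$ is analogous, with $u$ playing the role of $v$.

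The main (and really only) step to get right is the uniqueness of the paths and the component argument when $w$ is a direct neighbour of $u$ or $v$, i.e. the degenerate case $k=0$; here one must check that the penultimate vertex is literally $u$ for both paths (respectively $v$), which follows directly. No genuine computational obstacle arises; the statement is a structural consequence of the fact that, in the definition of $\bar{G}$, the weight of $(x,z)$ is a function only of the last edge of the $x$-$z$ path in $T$, combined with the tree property that paths from $u$ and from $v$ to any third vertex $w$ merge as soon as they leave the edge $(u,v)$.
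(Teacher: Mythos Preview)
Your proposal is correct and follows essentially the same approach as the paper: both arguments rest on the observation that the weight of $(x,z)$ in $\bar G$ depends only on the last edge of the unique $x$--$z$ path in $T$, and that the $u$--$w$ and $v$--$w$ paths share this last edge. The paper states this in two sentences, while you spell out the component decomposition and the degenerate neighbour case explicitly; the underlying idea is identical.
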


\begin{proof}
This is clear from the construction. The shortest path from $u$ to $w$ differs from the shortest path from $v$ to $w$ only by one edge, i.e. $(u,v)$ or $(v,u)$. The weights only depend on the last edges of the paths, which is the same in both cases.
\end{proof}

\begin{corollary} \label{cor1}
Let $T$ be a tree. The adjacency matrix $\A$ of its associated weighted complete digraph $\overline{G}$ is an $n\times n$ matrix, with elements $A_{i,i}=a_i$, for each edge $e_i=(u_i,v_i)\in T$ we have $A_{u_i,v_i}=b_i$ and $A_{v_i,u_i}=c_i$, and for all $w\not\in \{u_i,v_i\}$ we have $A_{u_i,w}=A_{v_i,w}$.
\end{corollary}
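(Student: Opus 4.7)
The plan is simply to translate the preceding Proposition into matrix language, together with unpacking the definitions of $G$ and $\overline{G}$, since the corollary is nothing more than a matrix-entry restatement of what has already been established.

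First I would verify the diagonal entries $A_{i,i}=a_i$: these are read off directly from Definition 1, which places a loop of weight $a_i$ at each vertex $i$ of $G$, and from Definition 3, which extends $G$ to $\overline{G}$ only by \emph{adding} previously absent edges without altering any existing ones. The same observation yields $A_{u_i,v_i}=b_i$ and $A_{v_i,u_i}=c_i$ for each $e_i=(u_i,v_i)\in T$, again by inheritance from $G$.

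The only non-trivial entries are the off-diagonal ones indexed by $(u_i,w)$ and $(v_i,w)$ with $w\notin\{u_i,v_i\}$. For these I would invoke the preceding Proposition verbatim: applied to the edge $(u_i,v_i)$ of $T$, it states that the edges $(u_i,w)$ and $(v_i,w)$ of $\overline{G}$ carry the same weight, which is the assertion $A_{u_i,w}=A_{v_i,w}$.

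I do not anticipate any real obstacle; this corollary is a purely notational repackaging that fixes the matrix form of the Proposition's conclusion for use in Sections 3--6. If anything required care it would be to confirm that the construction in Definition 3 is well-defined (the shortest path between any two vertices of a tree exists and is unique), but this is already covered in the statement of that definition and so need not be reproved here.
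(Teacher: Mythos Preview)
Your proposal is correct and matches the paper's approach: the corollary is stated without proof in the paper, as an immediate matrix-entry restatement of the preceding Proposition together with Definitions~1 and~3. Your explicit unpacking of the diagonal and tree-edge entries from the definitions, followed by invoking the Proposition for the remaining off-diagonal entries, is exactly the intended reading.
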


Lemma 1 of \cite{QTMK} states that the expression $P_{u,v} = \alpha x_u + \beta x_v$, with $\alpha\beta\neq0$, is a DP of LV-system (\ref{LVsys}) if and only if, for some constant $b$ and all $w\not\in \{u,v \}$, we have $A_{u,w}=A_{v,w}$, $b_u = b_v = b$, $\alpha (A_{v,v}-A_{u,v}) = \beta (A_{v,u}-A_{u,u})$ and $(A_{v,v}-A_{u,v})(A_{v,u}-A_{u,u})\neq 0$. Hence, the property of $\A$ stated in Corollary \ref{cor1} guarantees that the tree-system (\ref{Tsy}) admits additional Darboux polynomials. And this implies, using the rather general method of \cite[Section 2]{QTMK}, the tree-system admits integrals, cf. equations \eqref{ei1} and \eqref{ei2}.

As the existence of a solution does not guarantee uniqueness, one wonders whether there are other matrices $\A$ with the same property. We show this is not the case.

\begin{proposition}
In Corollary \ref{cor1}, the matrix $A$ with elements $A_{i,i}=a_i$, $A_{u_i,v_i}=b_i$ and $A_{v_i,u_i}=c_i$ for each edge $e_i=(u_i,v_i)\in T$, is uniquely defined by the condition that for all $w\not\in \{u_i,v_i\}$ we have $A_{u_i,w}=A_{v_i,w}$.
\end{proposition}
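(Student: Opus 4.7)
My plan is to show that the condition $A_{u_i,w}=A_{v_i,w}$ for every edge $e_i=(u_i,v_i)\in T$ and every $w\notin\{u_i,v_i\}$ forces each off-diagonal entry $A_{x,z}$ (with $(x,z)$ not itself an edge of $T$) to equal exactly the weight prescribed by Definition \ref{dacdg}. Since the diagonal entries and the entries indexed by edges of $T$ are fixed by hypothesis, this will pin down every entry of $\A$.

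The key idea is to propagate along the tree-path leading into $z$. Fix distinct vertices $x,z$ with $(x,z)$ not an edge of $T$. Because $T$ is a tree, there is a unique path
\[
x = y_0,\ y_1,\ y_2,\ \ldots,\ y_k = z \qquad (k\geq 2)
\]
in $T$. For each $j\in\{0,1,\ldots,k-2\}$, the unordered pair $\{y_j,y_{j+1}\}$ is an edge of $T$ (hence indexed by some $i$), and the third vertex $z=y_k$ satisfies $z\notin\{y_j,y_{j+1}\}$. Applying the hypothesis with $w=z$ to this edge gives $A_{y_j,z}=A_{y_{j+1},z}$. Chaining these equalities yields
\[
A_{x,z}=A_{y_0,z}=A_{y_1,z}=\cdots=A_{y_{k-1},z}.
\]
The last entry $A_{y_{k-1},z}$ is indexed by an edge of $T$ (namely the last edge of the path), so it is one of the prescribed weights $b_i$ or $c_i$ depending on whether $y_{k-1}<z$ or $y_{k-1}>z$. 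This is precisely the value assigned by Definition \ref{dacdg}, confirming that $A_{x,z}$ is forced and uniquely determined.

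The main (and only) thing to verify is that the propagation argument is well-defined, i.e.\ that the intermediate equalities legitimately use the hypothesis: this requires $k\geq 2$ (otherwise $(x,z)$ would already be a tree edge, contrary to assumption) and that $z$ is distinct from both endpoints of each intermediate edge, which holds automatically since the path $y_0,\ldots,y_k$ has no repeated vertices. There is no serious obstacle; the result is essentially an observation that the penultimate-vertex rule in Definition \ref{dacdg} is the only assignment consistent with the corollary's compatibility condition.
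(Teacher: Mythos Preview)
Your argument is correct and is essentially the same as the paper's: both fix an unknown entry $A_{x,z}$, take the unique tree path from $x$ to $z$, and chain the equalities $A_{y_j,z}=A_{y_{j+1},z}$ along consecutive edges until reaching the penultimate vertex, whose entry is a prescribed $b_i$ or $c_i$. The paper additionally wraps this in a bookkeeping remark that the number of unknowns equals the number of linear conditions, but the substance of the uniqueness argument is identical to yours.
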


\begin{proof}
In $\A$, there are $N=n^2-n-2(n-1)$ unknowns, which satisfy a system of $(n-1)(n-2)=N$ linear equations. We claim that each equation is used to solve for one unknown. Suppose $A_{u,v}$ is unknown. Let $u,w_1,\ldots,w_r,v$ be the shortest path in $T$ from $u$ to $v$. The weight of $(w_r,v)$ is known. It follows from Definition \ref{dacdg} that this is the value of $A_{u,v}$. But the aim here is to show how it follows from the condition. The condition tells us that $A_{w_{r-1},v}=A_{w_r,v}$, $A_{w_{r-2},v}=A_{w_{r-1},v}$, $\ldots$, and, finally, that  $A_{u,v}=A_{w_1,v}$ which hence equals the weight of $(w_r,v)$. To find $A_{u,v}$, we have used one equation to solve for one unknown $r$ times.
\end{proof}

\begin{definition}
Let $\A^{I;J}$ denote the matrix $\A$ with rows $i\in I$ and columns $j\in J$ deleted. Its determinant $|\A^{I;J}|$ is called a minor.
\end{definition}

By deleting a vertex, a tree becomes a forest with several connected components (trees).

\begin{proposition}[Factorisation of minors of $\A$] \label{FOMA}
Let $S$ denote the set of vertices of the tree which contains vertex $j$ in the forest obtained by deleting vertex $i$. Let $m$ be the number of vertices $k\in S$ such that $\min(i,j)<k<\max(i,j)$. Then
\begin{equation} \label{FOM}
|\A^{i;j}|=(-1)^m|\A^{S\setminus\{j\}\cup\{i\};S}|\prod_{v\in S\setminus\{j\}}(a_v-A_{j,v}).
\end{equation}
\end{proposition}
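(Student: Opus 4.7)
The plan is to perform row operations that create a zero block in $\A^{i;j}$, apply a generalised Laplace expansion to factor the determinant, and then track the resulting sign.

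First I would establish the key preliminary observation: for every $k\in S$ and every column index $c\in\{1,\ldots,n\}\setminus S$, the entry $A_{k,c}$ is independent of $k\in S$; in particular $A_{k,c}=A_{j,c}$. This is because $i$ is the unique cut vertex separating $S$ from its complement in $T$, and $i$ has a unique neighbour $i_S$ in $S$; hence every path in $T$ from $k\in S$ to $c$ exits $S$ via $i_S\to i$ and then stays in $S^c\cup\{i\}$ (writing $S^c:=\{1,\ldots,n\}\setminus(S\cup\{i\})$), so its terminal edge depends only on $c$. Subtracting row $j$ from every row $k\in S\setminus\{j\}$ of $\A^{i;j}$ then zeroes out every entry in the columns $\{i\}\cup S^c$, leaving the entries in the columns $v\in S\setminus\{j\}$ equal to $A_{k,v}-A_{j,v}$.

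The next step is to analyse the resulting $(S\setminus\{j\})\times(S\setminus\{j\})$ block $B$. Root $S$ at $j$, and order $S\setminus\{j\}$ by depth so that ancestors come before descendants. For $k\neq v$ in $S\setminus\{j\}$, if $v$ is not an ancestor of $k$, then the path in $S$ from $k$ to $v$ terminates at the edge $(p(v),v)$, where $p(v)$ is the parent of $v$; this is the same terminal edge as the path from $j$ to $v$, so $A_{k,v}=A_{j,v}$ and $B_{k,v}=0$. In the chosen ordering, $v$ being an ancestor of $k$ forces $v$ to precede $k$, so $B$ is lower triangular. Its diagonal entries are $B_{v,v}=a_v-A_{j,v}$, and therefore $\det(B)=\prod_{v\in S\setminus\{j\}}(a_v-A_{j,v})$.

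Since the row operations do not change the determinant, and the rows indexed by $S\setminus\{j\}$ in the transformed matrix are supported only on the columns $S\setminus\{j\}$, the generalised Laplace expansion along these rows reduces to a single term
\[
|\A^{i;j}|=(-1)^{\epsilon}\det(B)\cdot|\A^{S\setminus\{j\}\cup\{i\};S}|,
\]
where $\epsilon$ is the Laplace sign, i.e.\ the sum of the positions of the elements of $S\setminus\{j\}$ in the ordered row-index set $\{1,\ldots,n\}\setminus\{i\}$ and in the ordered column-index set $\{1,\ldots,n\}\setminus\{j\}$. A short parity computation gives
\[
\epsilon\equiv|\{k\in S\setminus\{j\}:k>i\}|+|\{k\in S\setminus\{j\}:k>j\}|\pmod 2,
\]
and splitting into the cases $i<j$ and $i>j$ (using $i\notin S$ and $j\in S$), each reduces to $m$ modulo $2$.

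I expect the main obstacle to be a combination of correctly identifying the rooted-tree ordering that makes $B$ triangular and the sign bookkeeping in the final step: the two cases $i<j$ and $i>j$ must be treated separately, and one has to be careful about where $j$ sits relative to $i$ inside $S$ to see how the parity collapses to exactly $m$.
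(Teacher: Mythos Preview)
Your proposal is correct and follows essentially the same approach as the paper: both subtract row $j$ from the rows indexed by $S\setminus\{j\}$ and exploit the resulting sparsity to factor the determinant. The paper organises this as an iterative cofactor expansion, peeling off one vertex $k$ at a time in order of distance from $j$ (so that at each step the reduced row has a single nonzero entry), whereas you perform all the row operations at once, recognise the triangular structure of the $(S\setminus\{j\})\times(S\setminus\{j\})$ block, and invoke a single generalised Laplace expansion; the sign bookkeeping you give is the batch analogue of the paper's per-step sign $\epsilon^{k}_{i,j}$.
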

\begin{proof}
Denote the $i$th row of $\A$ by $\A_i$. If the shortest path in $T$ from vertex $k$ to vertex $j$ passes through $m$ vertices $k,l,\ldots, j$, then $\A_{k}-\A_{j}$ has $m$ non-zero elements, i.e. in columns $k,l,\ldots, j$.
We write $S=S_1+S_2+\cdots$, where $S_m$ contains all vertices whose shortest path to $j$ contains $m+1$ vertices. Suppose $k\in S_1$ satisfies $k<\min(i,j)$. Then $\A^{i;j}_k-\A^{i;j}_j$ has only 1 non-zero entry, i.e. $a_k-A_{k,j}$ in column $k$ (the $j$-th column,
which contained the other non-zero entry, is deleted). Expanding the determinant in the $k$-th row, after row-operation $\A^{i;j}_k\mapsto \A^{i;j}_k-\A^{i;j}_j$, shows that
\begin{equation} \label{aij}
|\A^{i;j}|=\epsilon^k_{i,j}(a_k-A_{j,k})|\A^{\{k,i\};\{k,j\}}|,
\end{equation}
with $\epsilon^k_{i,j}=1$. When $k\in S_1$ satisfies $\max(i,j)<k$, after row-reduction of the $k-1$ row, the same non-zero entry arises in column $k-1$, and we find the same formula. If $\min(i,j)<k<\max(i,j)$ then we pick up a minus sign, $\epsilon^k_{i,j}=-1$. Continuing the row-reduction in this way, first for all $k\in S_1$, then $k\in S_2$, etc., yields formula (\ref{FOM}).
\end{proof}

\begin{proposition}[Ratios of minors of $\A$] \label{ROM}
For all edges $e_i=(u_i,v_i)$ and $w\not\in\{u_i,v_i\}$ we have
\[
(a_{u_i}-c_i)|\A^{w;u_i}|=(-1)^{u_i+v_i}(a_{v_i}-b_i)|\A^{w;v_i}|.
\]
\end{proposition}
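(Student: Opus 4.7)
The plan hinges on exploiting Corollary \ref{cor1}: rows $u_i$ and $v_i$ of $\A$ coincide outside columns $u_i$ and $v_i$, so the difference $\A_{u_i} - \A_{v_i}$ has just two nonzero entries, namely $a_{u_i} - c_i$ in column $u_i$ and $b_i - a_{v_i}$ in column $v_i$. This almost-zero row is the bridge between $|\A^{w;u_i}|$ and $|\A^{w;v_i}|$; no appeal to Proposition \ref{FOMA} is needed.

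The first step is to delete row $w$ from $\A$ and apply the elementary row operation replacing row $u_i$ by $\A_{u_i} - \A_{v_i}$; this leaves both $|\A^{w;u_i}|$ and $|\A^{w;v_i}|$ unchanged. In the first minor column $u_i$ has already been removed, so the modified row $u_i$ has a single nonzero entry $b_i - a_{v_i}$ in the original column $v_i$; in the second minor column $v_i$ has been removed, leaving a single nonzero entry $a_{u_i} - c_i$ in the original column $u_i$. Cofactor expansion along the modified row in each case produces the same $(n-2)\times(n-2)$ minor $M := |\A^{\{w,u_i\};\{u_i,v_i\}}|$.

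What remains is the sign bookkeeping, which I expect to be the main obstacle. Let $p_u$ denote the position of row $u_i$ among the rows of $\A$ after row $w$ is deleted (so $p_u = u_i$ or $u_i - 1$ depending on whether $w>u_i$ or $w<u_i$). Because $u_i < v_i$, column $v_i$ sits at position $v_i - 1$ in $\A^{w;u_i}$ while column $u_i$ sits at position $u_i$ in $\A^{w;v_i}$. The two cofactor expansions therefore give
\[
|\A^{w;u_i}| = (-1)^{p_u + v_i - 1}(b_i - a_{v_i})\,M, \qquad |\A^{w;v_i}| = (-1)^{p_u + u_i}(a_{u_i} - c_i)\,M.
\]
Multiplying the first by $(a_{u_i}-c_i)$ and the second by $(a_{v_i}-b_i)$, the $(-1)^{p_u}$ and $M$ cancel on comparing the two; the residual sign $(-1)^{v_i - 1 - u_i}$ combined with $-(b_i - a_{v_i}) = a_{v_i} - b_i$ delivers exactly $(-1)^{u_i + v_i}$, producing the claimed identity. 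The delicate point is that although $p_u$ depends on $w$, its contribution is identical in both expansions and so cancels, which is why the stated prefactor depends only on the parity of $u_i + v_i$ and not on $w$.
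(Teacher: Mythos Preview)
Your argument is correct. The row-operation idea (subtracting row $v_i$ from row $u_i$ in each minor so that the resulting row has a single nonzero entry) works exactly as you describe, and your sign bookkeeping is accurate: the position $p_u$ of row $u_i$ after deleting row $w$ appears identically in both expansions and drops out, leaving the parity $(-1)^{u_i+v_i}$.

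The paper takes a different, more indirect route: it deduces Proposition~\ref{ROM} from Proposition~\ref{FLCM}, which computes the entries $Z_{i,k}$ of $\Z=-\B\cdot\adj{\A}$ in two ways using the cofactor identity $\sum_j A_{r,j}C_{k,j}=|\A|\delta_{r,k}$. When $k\notin\{u_i,v_i\}$ both formulas in \eqref{Zij} apply, and equating them gives the desired relation. Your proof is more elementary and self-contained: it needs only Corollary~\ref{cor1} and Laplace expansion, avoiding the auxiliary matrices $\B$ and $\Z$ entirely. The paper's approach has the advantage of producing Proposition~\ref{FLCM} (which is used later, e.g.\ in Corollary~\ref{TPI} and Theorem~\ref{zhro}) as the primary result, with Proposition~\ref{ROM} falling out as a special case; yours isolates Proposition~\ref{ROM} cleanly but does not immediately yield the $Z_{i,k}$ formulas.
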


\begin{proof}
The proposition follows as a consequence of Proposition \ref{FLCM}.
\end{proof}

\section{Integrals for tree-systems}
\begin{definition}
Define an $(n-1)\times n$ matrix $\B$, whose rows and columns are indexed by edges and vertices respectively, as follows: for each edge $e_i=(u_i,v_i)\in T$, $x\in\{u_i,v_i\}$, $w\not\in \{u_i,v_i\}$, let $B_{i,x}=a_x$, and $B_{i,w}=A_{x,w}$ (whose value does not depend on $x$, due to Corollary \ref{cor1}).
\end{definition}

With this definition the elements of the $i$-th row of the matrix $\B$ are the coefficients of the cofactor of the $i$-th Darboux polynomial,
\begin{equation} \label{Pi}
P_i=(c_i-a_{u_i})x_{u_i}+(a_{v_i}-b_i)x_{v_i},
\end{equation}
of the tree-system (\ref{Tsy}). Indeed, we have $\dot{P}_i=P_i\left(a_{u_i}x_{u_i} + a_{v_i}x_{v_i} + \sum_{w\neq u_i,v_i} A_{u_i,w}x_w\right)$, cf. the $\Leftarrow$ part of the proof of \cite[Lemma 1]{QTMK}, i.e. the cofactor of $P_i$ equals $\sum_{j=1}^n B_{i,j}x_j$.

Here is a slightly more general method than the one provided in \cite[Section 2]{QTMK}. Define $\v=\ln(\x)$ and $\w=\ln(\P)$. Then $\dot{\v}=\A \x$ and $\dot{\w}=\B \x$. Multiplying yields\footnote{Here $\adj{\A}$ is the adjugate matrix of $\A$, which is the transpose of its cofactor matrix. It satisfies $\A.\adj{\A}=|\A|\I$. Note that the word cofactor has acquired two meanings: while $\A$ is the matrix of coefficients of cofactors of the coordinate DPs, its cofactor matrix is its matrix of signed minors.} $\adj{\A}\dot{\v}=|\A|\x$ and $|\A|\dot{\w}=|\A|\B\x=\B|\A|\x=\B.\adj{\A}\dot{\v}$. Integrating the latter yields $n-1$ integrals $\mathbf{I}=|\A|\ln(\P)-\B.\adj{\A}\ln(\x)$, and by exponentiation the $i$-th one becomes
\begin{equation} \label{ei1}
P_i^{|\A|}\prod_{k=1}^n x_k^{Z_{i,k}},
\end{equation}
where
\begin{equation} \label{ei2}
\Z=-\B.\adj{\A}.
\end{equation}

Note that we assume that the $3n-2$ parameters in $\A$ have generic values. This ensures e.g. that 
\[
|A|\prod_{i=1}^{n-1}\left(b_i-a_{v_i}\right)\left(c_i-a_{u_i}\right)\neq 0,
\]
so that neither the exponent of $P_i$ in (\ref{ei1}) nor the coefficients of $x_{u_i}$ and $x_{v_i}$ in (\ref{Pi}) vanish.
In appendix A, we show that in the particular case where $|\A|=0$, the expression \eqref{ei1} provides only one functionally independent integral.
  
\begin{example} \label{exaTr4}
For our running example, the matrix $\B$ is given by
\begin{equation} \label{MB}
\B=
\begin{pmatrix}
a_{{1}}&a_{{2}}&b_{{2}}&b_{{5}}&c_{{7}}&b_{{3}}&b_{{6}}&b_{{4}}\\
c_{{1}}&a_{{2}}&a_{{3}}&b_{{5}}&c_{{7}}&b_{{3}}&b_{{6}}&b_{{4}}\\
c_{{1}}&a_{{2}}&b_{{2}}&b_{{5}}&c_{{7}}&a_{{6}}&b_{{6}}&b_{{4}}\\
c_{{1}}&a_{{2}}&b_{{2}}&b_{{5}}&c_{{7}}&b_{{3}}&b_{{6}}&a_{{8}}\\
c_{{1}}&c_{{2}}&a_{{3}}&a_{{4}}&c_{{7}}&b_{{3}}&b_{{6}}&b_{{4}}\\
c_{{1}}&c_{{2}}&a_{{3}}&b_{{5}}&c_{{7}}&b_{{3}}&a_{{7}}&b_{{4}}\\
c_{{1}}&c_{{3}}&b_{{2}}&b_{{5}}&a_{{5}}&a_{{6}}&b_{{6}}&b_{{4}}
\end{pmatrix}.
\end{equation}
The integrals can be calculated explicitly but, with 22 free parameters, each of them would take up too much space to be included explicitly here.
\end{example}

\begin{example}[Special case]
Considering the special case where
\[
a_i=i,\quad b_i=i+5,\quad c_i=i+11
\]
we find the 7 Darboux polynomials
\[
11 x_{1}-4 x_{2},\ 11 x_{2}-4 x_{3},\ 12 x_{2}-2 x_{6},\ 13 x_{2}-x_{8},\ 13 x_{3}-6 x_{4},\ 14 x_{3}-4 x_{7},\ 13 x_{5}-6 x_{6}.
\]
The cofactor of the first DP is $(\B\x)_1=x_1+2x_2+7x_3+10x_4+18x_5+8x_6+11x_7+9x_8$ and the integral corresponding to the first edge/Darboux polynomial (after inversion and taking an 8-th root) is
\[
\frac{\left(11 x_{1}-4 x_{2}\right)^{1393111} x_{3}^{169884} x_{4}^{22880} x_{5}^{19008} x_{6}^{92928} x_{7}^{37752} x_{8}^{123552}}{x_{1}^{1378135} x_{2}^{467252}}.
\]
\end{example}

\begin{proposition}[Factorisation of linear combinations of minors of $\A$] \label{FLCM}
Let $\Z=-\B.\adj{\A}$. For all edges $e_i=(u_i,v_i)\in T$ we have
\begin{equation}\label{Zij}
Z_{i,k} =
\begin{cases}
(-1)^{k+u_i}(c_i-a_{u_i})|\A^{k;u_i}|, & k\neq v_i \\
(-1)^{k+v_i}(b_i-a_{v_i})|\A^{k;v_i}|, & k\neq u_i.
\end{cases}
\end{equation}
\end{proposition}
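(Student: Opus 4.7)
My plan is to compute $\B \cdot \adj{\A}$ directly by writing each row of $\B$ as a one-entry perturbation of a row of $\A$ and then invoking the identity $\A \cdot \adj{\A} = |\A|\,\I$. The key structural observation, which follows at once from the definition of $\B$ together with Corollary \ref{cor1}, is that the $i$-th row $\B_i$ differs from $\A_{u_i}$ in exactly one entry (column $v_i$, where $b_i$ is replaced by $a_{v_i}$), and likewise differs from $\A_{v_i}$ in exactly one entry (column $u_i$, where $c_i$ is replaced by $a_{u_i}$). Letting $\mathbf{e}_j$ denote the $j$-th standard row vector, this can be written as
\begin{equation*}
\B_i \;=\; \A_{u_i} + (a_{v_i}-b_i)\,\mathbf{e}_{v_i} \;=\; \A_{v_i} + (a_{u_i}-c_i)\,\mathbf{e}_{u_i}.
\end{equation*}

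Next, I multiply each of these two expressions on the right by the $k$-th column of $\adj{\A}$, using $(\A \cdot \adj{\A})_{l,k} = |\A|\,\delta_{l,k}$ together with $(\adj{\A})_{j,k} = (-1)^{j+k}|\A^{k;j}|$. This immediately yields both
\begin{equation*}
(\B\cdot\adj{\A})_{i,k} \;=\; |\A|\,\delta_{u_i,k} + (a_{v_i}-b_i)(-1)^{v_i+k}|\A^{k;v_i}|
\end{equation*}
and
\begin{equation*}
(\B\cdot\adj{\A})_{i,k} \;=\; |\A|\,\delta_{v_i,k} + (a_{u_i}-c_i)(-1)^{u_i+k}|\A^{k;u_i}|.
\end{equation*}
Since $\Z = -\B\cdot\adj{\A}$, in the first identity the Kronecker delta vanishes whenever $k\neq u_i$, giving $Z_{i,k} = (-1)^{v_i+k}(b_i-a_{v_i})|\A^{k;v_i}|$; in the second it vanishes whenever $k\neq v_i$, giving $Z_{i,k} = (-1)^{u_i+k}(c_i-a_{u_i})|\A^{k;u_i}|$. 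These are precisely the two cases of \eqref{Zij}.

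I expect essentially no real obstacle beyond careful sign bookkeeping in the adjugate convention; there is no induction, combinatorial manipulation, or appeal to the tree structure beyond what is already packaged into Corollary \ref{cor1}. As a bonus, equating the two formulas for $Z_{i,k}$ in the overlap range $k \notin \{u_i,v_i\}$ (where both displays give valid expressions for the same quantity and the delta terms both vanish) recovers exactly the identity of Proposition \ref{ROM}, which is why the proof of that proposition above can simply cite the present one.
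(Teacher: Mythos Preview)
Your proof is correct and is essentially the same as the paper's: both observe that the $i$-th row of $\B$ differs from $\A_{u_i}$ (respectively $\A_{v_i}$) in exactly the single entry in column $v_i$ (respectively $u_i$), then apply $\A\cdot\adj{\A}=|\A|\I$ so that the row-of-$\A$ contribution reduces to a Kronecker delta which vanishes in the relevant case. The only cosmetic difference is that the paper works with the cofactor matrix $\C$ and index sums $\sum_j B_{i,j}C_{k,j}$, whereas you package the same computation in row-vector notation; your closing remark that the overlap $k\notin\{u_i,v_i\}$ yields Proposition~\ref{ROM} also matches the paper's comment immediately following the statement.
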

Note that when $k\not\in\{u_i,v_i\}$ the two expressions in (\ref{Zij}) are equal, which is the statement of Proposition \ref{ROM}.
\begin{proof}
Let the cofactor matrix of $\A$ be $\C$, i.e. $C_{i,j}=(-1)^{i+j}|\A^{i;j}|$. Then we have $\adj{\A}=\C^t$, or, for all $i,k$,
\begin{equation} \label{AC}
\sum_j A_{i,j}C_{k,j}=\begin{cases}
|\A|, & i=k \\
0, & i\neq k.
\end{cases}
\end{equation}
We need to show that
\begin{equation} \label{BC}
-Z_{i,k}=\sum_j B_{i,j}C_{k,j}=\begin{cases}
(a_{u_i} - c_i)C_{k,u_i}, & k\neq v_i \\
(a_{v_i} - b_i)C_{k,v_i}, & k\neq u_i.
\end{cases}
\end{equation}

We notice that $B_{i,j}-A_{u_i,j}=(a_{v_i}-b_i)\delta_{j,v_i}$, and
$B_{i,j}-A_{v_i,j}=(a_{u_i}-c_i)\delta_{j,u_i}$, in terms of Kronecker's delta, $\delta_{i,i}=1$ and $\delta_{i,j\neq i}=0$. When $k\neq v_i$ we find
\[
-Z_{i,k}=\sum_j B_{i,j}C_{k,j}-\sum_j A_{v_i,j}C_{k,j}=(a_{u_i}-c_i)C_{k,u_i}
\]
and, when $k\neq u_i$,
\[
-Z_{i,k}=\sum_j B_{i,j}C_{k,j}-\sum_j A_{u_i,j}C_{k,j}=(a_{v_i}-b_i)C_{k,v_i}.
\]
\end{proof}

Using the above result, it now follows that if a graph contains two adjacent edges, the corresponding LV system possesses an integral that only depends on the three vertices involved, and not on any of the other vertices of the graph. Consider two edges $e_i=(u_i,v_i)$ and $e_j=(u_j,v_j)$. Then $e_i$ and $e_j$ are adjacent if either $a)\ v_i=u_j$, $b)\ v_i=v_j$, $c)\ u_i=u_j$, or $d)\ u_i=v_j$.  

\begin{corollary}[Three-point integrals] \label{TPI}
Let $\Z=-\B.\adj{\A}$.
\begin{enumerate}[a)]
\item Suppose $v_i=u_j$. The edges $e_i$ and $e_j$ give rise to integrals
\[
I_i=P_i^{|\A|}\prod_{k=1}^n x_k^{Z_{i,k}},\quad
I_j=P_j^{|\A|}\prod_{k=1}^n x_k^{Z_{j,k}},
\]
where
\[
P_i=(c_i-a_{u_i})x_{u_i}+(a_{u_j}-b_i)x_{u_j},\quad
P_j=(c_j-a_{u_j})x_{u_j}+(a_{v_j}-b_j)x_{v_j}.
\]
The ratio
\begin{equation}\label{rat1}
\sqrt[|\A|]{I_i^{a_{u_j}-c_j}/I_j^{a_{u_j}-b_i}}
=\left(P_i/x_{u_i}\right)^{a_{u_j}-c_j}\left(P_j/x_{v_j}\right)^{b_i-a_{u_j}}
\end{equation}
depends on $x_{u_i}/x_{u_j},x_{u_j}/x_{v_j}$ only.
\item When $v_i=v_j$ then
\begin{equation}\label{rat2}
\sqrt[|\A|]{{I_i^{a_{v_j}-c_j}}/{I_j^{a_{v_j}-b_i}}}
=\left(P_i/x_{u_i}\right)^{a_{v_j}-b_j}\left(P_j/x_{u_j}\right)^{b_i-a_{v_j}}
\end{equation}
depends on $x_{u_i}/x_{v_j},x_{v_j}/x_{u_j}$ only.
\item And, when $u_i=u_j$ then
\begin{equation}\label{rat3}
\sqrt[|\A|]{{I_i^{a_{u_j}-c_j}}/{I_j^{a_{u_j}-b_i}}}
=\left(P_i/x_{v_i}\right)^{a_{u_j}-c_j}\left(P_j/x_{v_j}\right)^{c_i-a_{u_j}}
\end{equation}
depends on $x_{v_i}/x_{u_j},x_{u_j}/x_{v_j}$ only.
\end{enumerate}
\end{corollary}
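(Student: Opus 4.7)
The plan is to substitute $I_i = P_i^{|\A|}\prod_k x_k^{Z_{i,k}}$ into the left-hand side, take the $|\A|$-th root, and reduce the identity to a pointwise comparison of exponents of the $x_k$. I will treat case (a), with $v_i = u_j$; cases (b)--(d) are structurally identical. Setting $\alpha = a_{u_j}-c_j$ and $\beta = a_{u_j}-b_i$, the $|\A|$-th root of the ratio equals
\[
P_i^{\alpha}\,P_j^{-\beta}\prod_{k=1}^n x_k^{(\alpha Z_{i,k}-\beta Z_{j,k})/|\A|}.
\]
Since $-\beta = b_i-a_{u_j}$, the $P_i, P_j$ prefactors already agree with the claimed right-hand side; the task reduces to showing that the exponent of $x_k$ vanishes for $k\notin\{u_i,u_j,v_j\}$ and that the three surviving exponents equal $c_j-a_{u_j}$, $0$, and $a_{u_j}-b_i$ respectively.

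For the vanishing at $k\notin\{u_i,u_j,v_j\}$, I would write $Z_{i,k}=(-1)^{k+u_i}(c_i-a_{u_i})|\A^{k;u_i}|$ via Proposition \ref{FLCM} (valid since $k\neq v_i=u_j$) and $Z_{j,k}=(-1)^{k+u_j}(c_j-a_{u_j})|\A^{k;u_j}|$ (valid since $k\neq v_j$). Proposition \ref{ROM} applied to edge $e_i$ (using $v_i=u_j$) gives a scalar relation between $|\A^{k;u_i}|$ and $|\A^{k;u_j}|$, and substituting then produces $\alpha Z_{i,k}=\beta Z_{j,k}$ after a brief algebraic simplification. For the shared vertex $k=u_j$, I switch to the other branch of \eqref{Zij}: both $Z_{i,u_j}$ and $Z_{j,u_j}$ become multiples of $|\A^{u_j;u_j}|$, with coefficients that cancel identically, so the exponent at $u_j$ is zero without invoking any further identity.

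The two remaining exponents, at $k=u_i$ and $k=v_j$, follow from the identity
\[
(a_u-c)\,|\A^{u;u}|+(b-a_v)(-1)^{u+v}\,|\A^{u;v}|=|\A|
\]
applied to each of the edges $e_i$ and $e_j$. This identity is obtained from the cofactor relation $\sum_j(A_{u,j}-A_{v,j})C_{u,j}=|\A|$, invoking Corollary \ref{cor1} to deduce that rows $u$ and $v$ of $\A$ agree outside columns $u$ and $v$ whenever $(u,v)\in T$, so only two terms survive. The final dependence claim is then immediate: $P_i$ is linear in $\{x_{u_i},x_{u_j}\}$ and $P_j$ in $\{x_{u_j},x_{v_j}\}$, so $P_i/x_{u_i}$ is a function of $x_{u_j}/x_{u_i}$ and $P_j/x_{v_j}$ is a function of $x_{u_j}/x_{v_j}$. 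The main obstacle is sign bookkeeping in the cofactor identity at $k=u_i$ and $k=v_j$; cases (b)--(d) reuse the same structure, choosing the alternate branch of Proposition \ref{FLCM} according to which endpoints of $e_i$ and $e_j$ coincide.
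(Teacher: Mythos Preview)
Your approach is correct and essentially the same as the paper's. The paper is a touch slicker at one point: for $k\notin\{u_i,v_j\}$ it uses the \emph{second} branch of \eqref{Zij} for $Z_{i,k}$ (valid since $k\neq u_i$) and the first branch for $Z_{j,k}$, so that both become multiples of the common minor $|\A^{k;u_j}|$ and cancel immediately, handling $k=u_j$ in the same stroke; you instead take the first branch for both, invoke Proposition~\ref{ROM} to convert $|\A^{k;u_i}|$ to $|\A^{k;u_j}|$, and treat $k=u_j$ separately. Functionally this is the same argument, since Proposition~\ref{ROM} \emph{is} the statement that the two branches of \eqref{Zij} coincide. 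For the endpoint exponents, your cofactor identity $(a_u-c)\,|\A^{u;u}|+(b-a_v)(-1)^{u+v}\,|\A^{u;v}|=|\A|$ is exactly what the paper uses (derived there by the row-subtraction expansion you describe); note only that at $k=v_j$ you need the companion identity coming from $\sum_j(A_{v,j}-A_{u,j})C_{v,j}=|\A|$, i.e.\ with first superscript $v$ rather than $u$, which your stated formula does not literally cover but which follows from the same one-line argument.
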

\begin{proof}
We provide a proof for (\ref{rat1}), the cases (\ref{rat2}) and (\ref{rat3}) are similar. Note that when $u_i=v_j$ the case (\ref{rat1}) applies after interchanging $i$ and $j$. Also note that any function of integrals is an integral. 

We simplify the ratio
\begin{align}
\frac{I_i^{a_{u_j}-c_j}}{I_j^{a_{u_j}-b_i}}
&=\frac{P_i^{|\A|(a_{u_j}-c_j)}}{P_j^{|\A|(a_{u_j}-b_i)}}
\prod_{k=1}^n x_k^{(a_{u_j}-c_j)Z_{i,k}-(a_{u_j}-b_i)Z_{j,k}}\notag\\
&=\frac{P_i^{|\A|(a_{u_j}-c_j)}}{P_j^{|\A|(a_{u_j}-b_i)}}
\prod_{k\in\{u_i,v_j\}} x_k^{(a_{u_j}-c_j)Z_{i,k}-(a_{u_j}-b_i)Z_{j,k}}\label{IoI}
\end{align}
as we have, due to Proposition \ref{FLCM}, for $k\neq \{u_i,v_j\}$,
\begin{align*}
(a_{u_j}-c_j)Z_{i,k}-(a_{u_j}-b_i)Z_{j,k}&=
(a_{u_j}-c_j)(-1)^{k+u_j}(b_i-a_{u_j})|\A^{k;u_j}|\\
& \ \ \ -(a_{u_j}-b_i)(-1)^{k+u_j}(c_j-a_{u_j})|\A^{k;u_j}|\\
&=0.
\end{align*}
The exponent of $x_{u_i}$ in (\ref{IoI}) can be written as
\[
(a_{u_j}-c_j)Z_{i,u_i}-(a_{u_j}-b_i)Z_{j,u_i}=(c_j-a_{u_j})|\A|.
\]
This can be seen by subtracting the $u_j$-th row of $\A$ from the $u_i$-th row, expanding the determinant in the $u_i$-th row, i.e.
\[
|\A|=(a_{u_i}-c_i)|\A^{u_i;u_i}|+(-1)^{u_i+u_j}(b_i-a_{u_j})|\A^{u_i;u_j}|,
\]
and using Proposition \ref{FLCM}. Similarly, it can be shown that the exponent of $x_{v_j}$ in (\ref{IoI}) can be written as
\[
(a_{u_j}-c_j)Z_{i,v_j}-(a_{u_j}-b_i)Z_{j,v_j}=(a_{u_j}-b_i)|\A|.
\]
The result follows by combining terms and taking the $|\A|$-th root.
\end{proof}

\begin{example} \label{eae}
The T-system \eqref{tsys} has the following integrals (which do not take up too much space), related to the pairs of adjacent edges $(e_1,e_4),\
(e_3,e_4),\
(e_3,e_7),\
(e_2,e_3),\
(e_2,e_5),\
(e_2,e_6)$:
\begin{align*}
&\left(\left(c_{1}-a_{1}\right)+\left(a_{2}-b_{1}\right) \dfrac{x_{2}}{x_{1}}\right)^{a_{2}-c_{4}}\left(\left(c_{4}-a_{2}\right) \dfrac{x_{2}}{x_{8}}+\left(a_{8}-b_{4}\right)\right)^{b_1-a_{2}},\\
&\left(\left(c_{3}-a_{2}\right)\dfrac{x_{2}}{x_{6}}+\left(a_{6}-b_{3}\right) \right)^{a_{2}-c_{4}}\left(\left(c_{4}-a_{2}\right) \dfrac{x_{2}}{x_{8}}+\left(a_{8}-b_{4}\right)\right)^{c_3-a_{2}},\\
&\left(\left(c_{3}-a_{2}\right) +\left(a_{6}-b_{3}\right) \dfrac{x_{6}}{x_{2}}\right)^{a_{6}-b_{7}}\left(\left(c_{7}-a_{5}\right) +\left(a_{6}-b_{7}\right)\dfrac{ x_{6}}{x_{5}}\right)^{b_3-a_{6}},
\end{align*}
\begin{equation}\label{ints}
\begin{split}
&\left(\left(c_{2}-a_{2}\right)\dfrac{x_{2}}{x_{3}}+\left(a_{3}-b_{2}\right) \right)^{a_{2}-c_{3}}\left(\left(c_{3}-a_{2}\right) \dfrac{x_{2}}{x_{6}}+\left(a_{6}-b_{3}\right) \right)^{c_2-a_{2}},\\
&\left(\left(c_{2}-a_{2}\right)+\left(a_{3}-b_{2}\right) \dfrac{x_{3}}{x_{2}}\right)^{a_{3}-c_{5}}\left(\left(c_{5}-a_{3}\right) \dfrac{x_{3}}{x_{4}}+\left(a_{4}-b_{5}\right) \right)^{b_2-a_{3}},\\
&\left(\left(c_{2}-a_{2}\right) +\left(a_{3}-b_{2}\right) \dfrac{x_{3}}{x_{2}}\right)^{a_{3}-c_{6}}\left(\left(c_{6}-a_{3}\right) \dfrac{x_{3}}{x_{7}}+\left(a_{7}-b_{6}\right) \right)^{b_2-a_{3}}.
\end{split}
\end{equation}
\end{example}

The adjacent edges in Example \ref{eae} were chosen in such a way that the resulting integrals are functionally independent. This can be done algorithmically, which will be explained in the proof of Theorem \ref{tfii}. We reiterate that 3-point integrals are local functions, in the sense that they depend on only a subset of the variables. To prove that they are invariant, we only need those components of the system of differential equations which describe the evolution of the subset of variables. Local integrals were employed before in \cite{KMQ}.

\section{Solution by quadratures using edge-variables}
Given a tree $T$ on $n$ vertices, we introduce for each edge $e_i=(u_i,v_i)$ a so-called edge variable $y_i=x_{u_i}/x_{v_i}$. With edge variables $y_1,\ldots, y_{n-1}$, and a distinguished vertex variable $x_k$, we can express the vertex variables as follows.
\begin{equation} \label{trans}
x_i=x_k \prod_{j\in {\cal E}_i^k} y_j^{\epsilon_j},
\end{equation}
where $j\in{\cal E}_i^k$ iff either $e_j=(u_j,v_j)$ (with $u_j<v_j$) or $e_j^\prime=(v_j,u_j)$ is an edge in the directed tall tree which is the shortest path from $x_i$ to $x_k$, and if it is $e_j$ then $\epsilon_j=1$, if it is $e_j^\prime$ then $\epsilon_j=-1$.
\begin{proposition} \label{tev}
A T-system can be written in edge variables, and the distinguished $x_k$, as
\[
\dot{y_i}=x_k \left(\prod_{j\in {\cal E}_{u_i}^k} y_j^{\epsilon_j} \right)\left(\left(a_{u_i}-c_i\right)y_i-a_{v_i}+b_i \right),\qquad i=1,\ldots,n-1
\]
and
\[
\dot{x_k}=x_k^2\sum_{i=1}^n d_i \prod_{j\in {\cal E}_i^k} y_j^{\epsilon_j}
\]
with $d_k=a_k$ and, for $i\neq k$,
\[
d_i=\begin{cases}
b_j& \text{ if } e_j=(w,x_i)\\
c_j& \text{ if } e_j=(x_i,w),
\end{cases}
\]
where $x_i,w,\ldots,x_k$ is the shortest path from $x_i$ to $x_k$.
\end{proposition}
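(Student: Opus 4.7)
The plan is to derive both formulas by direct logarithmic differentiation, exploiting Corollary \ref{cor1} to collapse a long sum for $\dot y_i$ and Definition \ref{dacdg} to identify matrix entries with the coefficients $d_i$ in the equation for $\dot x_k$.

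First I would write $\dot y_i/y_i = \dot x_{u_i}/x_{u_i} - \dot x_{v_i}/x_{v_i}$ and apply the T-system (\ref{Tsy}) to obtain $\dot y_i/y_i = \sum_{j=1}^n (A_{u_i,j} - A_{v_i,j})\, x_j$. By Corollary \ref{cor1}, every summand with $j\notin\{u_i,v_i\}$ vanishes, leaving only the $j=u_i$ contribution $(a_{u_i}-c_i)x_{u_i}$ and the $j=v_i$ contribution $(b_i-a_{v_i})x_{v_i}$. Multiplying back by $y_i = x_{u_i}/x_{v_i}$ yields $\dot y_i = x_{u_i}\left[(a_{u_i}-c_i)y_i + b_i - a_{v_i}\right]$. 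Substituting (\ref{trans}) with $i$ replaced by $u_i$ then expresses $x_{u_i}$ as $x_k$ times the required product of edge variables, giving the first stated formula.

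For the second formula, the T-system gives $\dot x_k = x_k \sum_{j=1}^n A_{k,j}\, x_j$, so the task reduces to checking $A_{k,j}=d_j$ for every $j$. When $j=k$ we have $A_{k,k}=a_k=d_k$ by definition. When $j\neq k$, Definition \ref{dacdg} assigns the weight of edge $(k,j)$ in $\overline{G}$ according to the penultimate vertex $y$ of the shortest path from $k$ to $j$: it equals $b_\ell$ if $y<j$ (so $e_\ell=(y,j)$) and $c_\ell$ if $y>j$ (so $e_\ell=(j,y)$). Reversing the direction of the path, $y$ is precisely the vertex $w$ adjacent to $j$ on the shortest path from $j$ to $k$, and the two cases match the case split defining $d_j$ verbatim. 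Plugging $x_j = x_k \prod_{\ell\in\mathcal{E}_j^k} y_\ell^{\epsilon_\ell}$ from (\ref{trans}) into the sum produces the claimed expression for $\dot x_k$.

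The argument is essentially bookkeeping: Corollary \ref{cor1} is the substantive ingredient, and no deep step is required. The only point deserving care is the orientation matching in the identification $A_{k,j}=d_j$, since the direction convention for edges in Definition \ref{dacdg} (traversing $k\to j$) has to be reconciled with the convention for $d_j$ in Proposition \ref{tev} (traversing $j\to k$), and similarly one must keep track of the exponents $\epsilon_j=\pm 1$ when substituting (\ref{trans}).
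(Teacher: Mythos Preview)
Your proof is correct and follows essentially the same route as the paper: the paper also computes $\dot y_i$ via the quotient rule, uses the fact that rows $u_i$ and $v_i$ of $\A$ agree off the $(u_i,v_i)$ columns (i.e.\ Corollary~\ref{cor1}) to collapse the sum, and then substitutes \eqref{trans}; for $\dot x_k$ it likewise appeals to Definition~\ref{dacdg} to identify $A_{k,j}$ with $d_j$ and substitutes \eqref{trans}. Your write-up is in fact slightly more explicit than the paper's on the orientation check $A_{k,j}=d_j$.
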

\begin{proof}
Firstly, we have
$\begin{aligned}[t]
\dot{y_i}&=\frac{\dot{x_{u_i}}}{x_{v_i}}-\frac{x_{u_i}}{x_{v_i}^2}\dot{x_{v_i}}\\
&=y_i\left(a_{{u_i}}x_{u_i}+b_ix_{v_i}-c_ix_{u_i}-a_{{v_i}}x_{v_i}\right)\\
&=x_{u_i}\left(\left(a_{{u_i}}-c_i\right)y_{i}-a_{{v_i}}+b_i\right),\\
&=x_k \left(\prod_{j\in {\cal E}_{u_i}^k} y_j^{\epsilon_j}\right)\left(\left(a_{{u_i}}-c_i\right)y_{i}-a_{{v_i}}+b_i\right).
\end{aligned}$

Next we consider the equation for $x_k$. From Definition \ref{dacdg}
\[
\dot{x_k}=x_k \sum_{i=1}^n d_i x_i,
\]
and the result follows from \eqref{trans}.
\end{proof}

Interestingly, the second line of the proof of Proposition \ref{tev} can be written as
$\dot{y_i}=-y_iP_i$, suggesting that the DPs of system (\ref{Tsy}) play the role of cofactors of the edge variables. However, one has to be careful here. Expressed in edge variables, the rational functions $P_i$ may have $y_i$ as a factor of its denominator. Hence, the resulting edge system does not necessarily take a Lotka-Volterra form.

\begin{example}
Choosing $x_8$ as the distinguished variable, the vertex variables relate to the edge variables by
\[
\frac{x_{1}}{x_{8}} = y_{1} y_{4},\
\frac{x_{2}}{x_{8}} = y_{4},\
\frac{x_{3}}{x_{8}} = \frac{y_{4}}{y_{2}},\
\frac{x_{4}}{x_{8}} = \frac{y_{4}}{y_{2} y_{5}},\
\frac{x_{5}}{x_{8}} = \frac{y_{4} y_{7}}{y_{3}},\
\frac{x_{6}}{x_{8}} =\frac{y_{4}}{y_{3}},\
\frac{x_{7}}{x_{8}} = \frac{y_{4}}{y_{2} y_{6}}
\]
Using edge variables, the system \eqref{tsys} transforms into
\begin{equation}\label{rsys}
\begin{pmatrix}
\dot{y_1}\\
\dot{y_2}\\
\dot{y_3}\\
\dot{y_4}\\
\dot{y_5}\\
\dot{y_6}\\
\dot{y_7}
\end{pmatrix}=y_{4}x_{8}
\begin{pmatrix}
y_{1} \left(a_{1} y_{1}-c_{1} y_{1}-a_{2}+b_{1}\right)\\
a_{2} y_{2}-c_{2} y_{2}-a_{3}+b_{2}\\
a_{2} y_{3}-c_{3} y_{3}-a_{6}+b_{3}\\
a_{2} y_{4}-c_{4} y_{4}-a_{8}+b_{4}\\
\left(a_{3}y_{5}-c_{5}y_{5}-a_{4}+b_{5}\right)/y_{2}\\
\left(a_{3} y_{6}-c_{6} y_{6}-a_{7}+b_{6}\right)/y_{2}\\
y_{7}\left(a_{5} y_{7}-c_{7} y_{7}-a_{6}+b_{7}\right)/y_{3}
\end{pmatrix},
\end{equation}
together with
\begin{equation}\label{x8}
\dot{x_8}=y_4 x_{8}^2 \left(c_{1} y_{1}+c_{4}+\frac{c_{7} y_{7}}{y_{3}}+\frac{a_{8}}{y_4}+\frac{b_{2}}{y_{2}}+\frac{b_{3} }{y_{3}}+\frac{b_{5} }{y_{2} y_{5}}+\frac{b_{6} }{y_{2} y_{6}}\right).
\end{equation}
The integrals \eqref{ints} are expressed in edge variables as
\begin{equation}\label{intse}
\begin{split}
&\left(\left(c_{1}-a_{1}\right)+\left(a_{2}-b_{1}\right) /y_1\right)^{a_{2}-c_{4}}\left(\left(c_{4}-a_{2}\right) y_4+\left(a_{8}-b_{4}\right)\right)^{b_1-a_{2}},\\
&\left(\left(c_{3}-a_{2}\right)y_3+\left(a_{6}-b_{3}\right) \right)^{a_{2}-c_{4}}\left(\left(c_{4}-a_{2}\right) y_4+\left(a_{8}-b_{4}\right)\right)^{c_3-a_{2}},\\
&\left(\left(c_{3}-a_{2}\right) +\left(a_{6}-b_{3}\right) /y_3\right)^{a_{6}-b_{7}}\left(\left(c_{7}-a_{5}\right) +\left(a_{6}-b_{7}\right)/y_7\right)^{b_3-a_{6}},\\
&\left(\left(c_{2}-a_{2}\right)y_2+\left(a_{3}-b_{2}\right) \right)^{a_{2}-c_{3}}\left(\left(c_{3}-a_{2}\right) y_3+\left(a_{6}-b_{3}\right) \right)^{c_2-a_{2}},\\
&\left(\left(c_{2}-a_{2}\right)+\left(a_{3}-b_{2}\right) /y_2\right)^{a_{3}-c_{5}}\left(\left(c_{5}-a_{3}\right) y_5+\left(a_{4}-b_{5}\right) \right)^{b_2-a_{3}},\\
&\left(\left(c_{2}-a_{2}\right) +\left(a_{3}-b_{2}\right) /y_2\right)^{a_{3}-c_{6}}\left(\left(c_{6}-a_{3}\right) y_6+\left(a_{7}-b_{6}\right) \right)^{b_2-a_{3}}.
\end{split}
\end{equation}
\end{example}

\begin{theorem} \label{nm1}
Each $T$-system, where $T$ is a tree on $n$ vertices, reduces to an $(n-1)$-dimensional edge system which is superintegrable, and solvable by quadratures.
\end{theorem}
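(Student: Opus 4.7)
The plan is to use Proposition \ref{tev} together with a time reparametrisation to isolate an autonomous $(n-1)$-dimensional subsystem, exhibit $n-2$ functionally independent 2-point integrals for it, and then invert them sequentially to reduce to a single scalar quadrature.

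First I would observe that Proposition \ref{tev} gives $\dot{y_i}=x_k\,H_i(y)$, where $H_i(y):=\bigl(\prod_{j\in{\cal E}_{u_i}^k}y_j^{\epsilon_j}\bigr)\bigl((a_{u_i}-c_i)y_i-a_{v_i}+b_i\bigr)$ depends only on the edge variables. Introducing a new time $\tau$ by $d\tau=x_k\,dt$ (well-defined for generic initial data, since $x_k$ does not identically vanish), the $y$-equations become the autonomous $(n-1)$-dimensional edge system $dy_i/d\tau=H_i(y)$. This is the reduction asserted in the theorem; what remains is to establish superintegrability and solvability by quadratures for this autonomous system.

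Next, I would produce $n-2$ functionally independent integrals of the edge system. By Corollary \ref{TPI}, every pair of adjacent edges of $T$ yields an integral of the $T$-system that depends on exactly two edge variables, i.e.\ a genuine 2-point integral of the edge system. Because $T$ is connected, so is its line graph $L(T)$ (vertices = edges of $T$, adjacency = shared endpoint), and therefore $L(T)$ admits a spanning tree $\mathcal{S}$ with $n-2$ edges. A BFS traversal of $\mathcal{S}$ orders the edges of $T$ as $e_{i_1},\dots,e_{i_{n-1}}$ so that each $e_{i_m}$ with $m\geq 2$ is $\mathcal{S}$-adjacent to some $e_{i_{m'}}$ with $m'<m$; let $J_m$ denote the corresponding 2-point integral, depending only on $y_{i_m}$ and $y_{i_{m'}}$. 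The Jacobian of $(y_{i_1},J_2,\dots,J_{n-1})$ with respect to $(y_{i_1},y_{i_2},\dots,y_{i_{n-1}})$ is then lower triangular: $\partial J_m/\partial y_{i_l}=0$ for $l>m$ by construction, while each diagonal entry $\partial J_m/\partial y_{i_m}$ is generically nonzero, since $J_m$ is a nontrivial power of a linear expression in $y_{i_m}^{\pm 1}$. Full rank then implies that $J_2,\dots,J_{n-1}$ are functionally independent, establishing superintegrability of the edge system.

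For solvability by quadratures I would invert the $J_m$ algebraically, one at a time: from $J_2=\mathrm{const}$ solve $y_{i_2}$ as an explicit function of $y_{i_1}$; from $J_3$, which involves $y_{i_3}$ and some earlier variable already expressed through $y_{i_1}$, solve $y_{i_3}$ as a function of $y_{i_1}$; and so on. Substituting into $dy_{i_1}/d\tau=H_{i_1}(y)$ yields a scalar separable ODE for $y_{i_1}(\tau)$, solvable by a single quadrature, after which every other $y_i(\tau)$ is known. The main obstacle is the functional-independence argument; once the spanning-tree ordering of $L(T)$ delivers the lower-triangular Jacobian the remainder is essentially bookkeeping.
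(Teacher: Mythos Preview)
Your argument is correct. The reduction via Proposition~\ref{tev} and the time reparametrisation are exactly what the paper does, and your spanning-tree/BFS ordering of the line graph to obtain a triangular Jacobian is the same mechanism the paper uses (there it is packaged into the proof of Theorem~\ref{tfii} and simply cited in the proof of Theorem~\ref{nm1}).

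Where you genuinely diverge is the quadratures step. You invert the $n-2$ two-point integrals to express every $y_{i_m}$ algebraically in terms of $y_{i_1}$, substitute, and then perform a single scalar quadrature for $y_{i_1}(\tau)$. The paper instead never uses the integrals for this part: it observes directly that the edge system is triangular with respect to distance from the distinguished vertex $k$. For an edge $e_i$ incident to $k$ the factor $\prod_{j\in{\cal E}_{u_i}^k}y_j^{\epsilon_j}$ is either $1$ or $y_i$ itself, so $\dot y_i$ is of the form $p(\tau)(\alpha y_i+\beta)$ or $p(\tau)y_i(\alpha y_i+\beta)$ with $p\equiv 1$; these are solved in closed form. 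Moving one layer outward, the prefactor involves only already-solved variables, so each subsequent equation is again linear or Riccati with known coefficients, and one proceeds layer by layer. Your route is conceptually tidy (one quadrature instead of $n-1$), but the integrand of that single quadrature carries nested radicals from the inversions; the paper's route costs more quadratures but each one is elementary and yields the explicit exponential/power expressions displayed in the examples.
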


\begin{example}
Denote the r.h.s. of the combined system (\ref{rsys},\ref{x8}) by $f$, so that
\begin{equation} \label{ex8}
(\dot{y},\dot{x_8})=f=ag,
\end{equation}
with $a=y_{4}x_{8}$. Let us denote the 6 integrals \eqref{intse} of the tree-system by $K_i$. They are integrals of (\ref{ex8}), i.e. we have $f\cdot \nabla K_i=0$.
But then, since $f=ag$, we have $g\cdot \nabla K_i=0$ and hence the integrals are also integrals of the system
\begin{equation} \label{ex82}
(\dot{y},\dot{x_8})=g.
\end{equation}
Because the integrals depend on the $y$-variables only, and the first seven components of $g$ do not depend on $x_8$, we find that the 7-dimensional system
\begin{equation}\label{esys}
\begin{pmatrix}
\dot{y_1}\\
\dot{y_2}\\
\dot{y_3}\\
\dot{y_4}\\
\dot{y_5}\\
\dot{y_6}\\
\dot{y_7}
\end{pmatrix}=
\begin{pmatrix}
y_{1} \left((a_{1} -c_{1}) y_{1}-a_{2}+b_{1}\right)\\
(a_{2} -c_{2} )y_{2}-a_{3}+b_{2}\\
(a_{2} -c_{3} )y_{3}-a_{6}+b_{3}\\
(a_{2} -c_{4} )y_{4}-a_{8}+b_{4}\\
\left((a_{3}-c_{5})y_{5}-a_{4}+b_{5}\right)/y_{2}\\
\left((a_{3}-c_{6} )y_{6}-a_{7}+b_{6}\right)/y_{2}\\
y_{7}\left((a_{5}-c_{7} )y_{7}-a_{6}+b_{7}\right)/y_{3}
\end{pmatrix},
\end{equation}
admits 6 functionally independent integrals and hence is superintegrable.
\end{example}

\begin{proof}[Proof of Theorem \ref{nm1}.] We rewrite the tree-system in variables $(y,x_k)$, cf. Proposition \ref{tev}, as
\begin{equation} \label{yxk}
\begin{split}
\dot{y_i}&=x_kf_i, \quad i=1,\ldots,n-1\\
\dot{x_k}&=x_k^2g,
\end{split}
\end{equation}
where $f_i,g$ do not depend on $x_k$. The $n-2$ three-point integrals given in Corollary \ref{TPI} depend on the $y$-variables only and they are functionally independent integrals of the $(n-1)$-dimensional system
\begin{equation} \label{edgesys}
\dot{y_i}=f_i, \quad i=1,\ldots,n-1,
\end{equation}
which we call the edge system. We note that one may divide each $f_i$ by the greatest common divisor of all the $f_i$ (which is nontrivial if $x_k$ is at the end of a branch of the tree).

Each equation of the edge system \eqref{edgesys} is either of the form
\begin{equation} \label{form}
\dot{y}=p(\alpha y+\beta)\quad \text{ or }\quad \dot{y}=py(\alpha y+\beta),
\end{equation}
where $\alpha,\beta$ are constants, and $p$ is a function of $t$.
In the first case we get
\[
\frac{\dot{y}}{\alpha y+\beta}=p \Rightarrow \frac{\ln(\alpha y+\beta)}{\alpha} = \int p.
\]
In the second case we get, with $z=1/y$,
\[
p=\frac{\dot{y}}{y(\alpha y+\beta)}=-\frac{\dot{z}}{\alpha+\beta z}
\Rightarrow \frac{\ln(\alpha+\beta/y)}{\beta} = - \int p.
\]
Starting at the distinguished variable $x_k$, the equations for the edge variables connected to $x_k$ are of the form \eqref{form}. Once these equations have been solved, one considers the edges one step further away from $x_k$, and this process can be continued.
\end{proof}

\begin{example}
In the system (\ref{esys}), the first 4 equations are of the form \eqref{form}. We obtain
\begin{align*}
&y_1=\frac{a_{2}-b_{1}}{C_1{\mathrm e}^{\left(a_{2}-b_{1}\right) t} +a_{1}-c_{1}},
&y_2=C_2 {\mathrm e}^{\left(a_{2}-c_{2}\right) t} +\frac{a_{3}-b_{2}}{a_{2}-c_{2}},\\
&y_3=C_3 {\mathrm e}^{\left(a_{2}-c_{3}\right) t} +\frac{a_{6}-b_{3}}{a_{2}-c_{3}},
&y_4=C_4 {\mathrm e}^{\left(a_{2}-c_{4}\right) t} +\frac{a_{8}-b_{4}}{a_{2}-c_{4}}.
\end{align*}
Given the solutions for $y_2$ and $y_3$ we find
\begin{align*}
y_5&=C_5 \left( C_2+{\frac { \left( a_{{3}}-b_{{2}} \right) {
{\rm e}^{- \left( a_{{2}}-c_{{2}} \right) t}}}{a_{{2}}-c_{{2}}}}
 \right) ^{-{\frac {a_{{3}}-c_{{5}}}{a_{{3}}-b_{{2}}}}}-{\frac {b_{{5}
}-a_{{4}}}{a_{{3}}-c_{{5}}}}
\\
y_6&=C_6 \left( C_2+{\frac { \left( a_{{3}}-b_{{2}} \right) {
{\rm e}^{- \left( a_{{2}}-c_{{2}} \right) t}}}{a_{{2}}-c_{{2}}}}
 \right) ^{-{\frac {a_{{3}}-c_{{6}}}{a_{{3}}-b_{{2}}}}}-{\frac {b_{{6}
}-a_{{7}}}{a_{{3}}-c_{{6}}}}
\\
y_7&={(b_{{7}}-a_{{6}}) \left(  C_7 \left( C_3+{\frac { \left( a
_{{6}}-b_{{3}} \right) {{\rm e}^{- \left( a_{{2}}-c_{{3}} \right) t}}
}{a_{{2}}-c_{{3}}}} \right) ^{{\frac {b_{{7}}-a_{{6}}}{a_{{6}}-b_{{3}}
}}}-a_{{5}}+c_{{7}} \right) ^{-1}}.
\end{align*}
\end{example}

Once a solution to the $n-1$ dimensional edge system has been found, one may retrieve a solution to the original tree-system as follows. Let the tree-system, cf. \eqref{yxk}, be
\begin{equation} \label{tysys}
\begin{split}
\dot{y_i}&= h f_i, \quad i=1,\ldots,n-1\\
\dot{x_k}&=x_k h g,
\end{split}
\end{equation}
and suppose a solution to the edge system
\begin{equation} \label{eesys}
\dot{y_i}=f_i, \quad i=1,\ldots,n-1
\end{equation}
is known. One first determines the solution to $\dot{x_k}=x_k g(y)$, from $\ln(x_k)=\int g$. With $z=(y,x_k)$ and $w=(f,x_k g)$ we then have the solution,
$z(\tau)$ to
\begin{equation} \label{eg}
\frac{d z}{d \tau}=w(z).
\end{equation}
We seek to find a solution to
\begin{equation} \label{eag}
\frac{d z}{d t}=h(z)w(z).
\end{equation}
From \eqref{eg} and \eqref{eag} we obtain
\[
\frac{d t}{d \tau}=\frac{1}{h(z(\tau))} \implies t=\int \frac{1}{h(z(\tau))} d\tau := u(\tau).
\]
Inverting $u$ we find $\tau=u^{-1}(t)$ and the solution to \eqref{eag} is $z(u^{-1}(t))$.

\section{Tree-systems are superintegrable}
\begin{theorem} \label{tfii}
The $n$-dimensional tree-system \eqref{Tsy} admits $n-1$ functionally independent integrals.
\end{theorem}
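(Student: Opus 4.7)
The plan is to exhibit $n-1$ integrals in the form $J_2,\dots,J_{n-1},I_1$, where $J_2,\dots,J_{n-1}$ are three-point integrals chosen from Corollary \ref{TPI} via a spanning-tree construction on the line graph of $T$, and $I_1$ is one of the DP-based integrals from \eqref{ei1}, selected so that it depends non-trivially on a distinguished vertex variable $x_k$. Functional independence will follow by showing that the Jacobian of these $n-1$ integrals with respect to a carefully chosen set of $n-1$ coordinates is lower triangular with nonzero diagonal.

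Pick a distinguished vertex $k$ and introduce the edge variables $y_1,\dots,y_{n-1}$ of Proposition \ref{tev}. Since $T$ is connected, the line graph $L(T)$ (whose vertices are the edges of $T$, with adjacency given by sharing an endpoint) is connected. Choose a rooted spanning tree of $L(T)$ and relabel the edges of $T$ as $e_1,\dots,e_{n-1}$ so that for each $i\geq 2$ the edge $e_i$ is adjacent in $T$ to a unique parent $e_{p(i)}$ with $p(i)<i$. For each such $i$, Corollary \ref{TPI} produces a three-point integral $J_i$ associated with the adjacent pair $(e_{p(i)},e_i)$, and by inspection of \eqref{rat1}--\eqref{rat3} (and Example \ref{eae}), in edge variables $J_i$ is a rational function of $y_i$ and $y_{p(i)}$ alone. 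This yields $n-2$ integrals of the tree-system that depend only on the edge variables.

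For the last integral, take $I_1=P_1^{|\A|}\prod_{k'}x_{k'}^{Z_{1,k'}}$ from \eqref{ei1}. Using \eqref{trans} to substitute $x_j=x_k\prod_{\ell}y_\ell^{\epsilon_\ell(j)}$ and writing $P_1=x_k\cdot p(y)$ for a rational function $p$, one obtains $I_1=x_k^{M}\,\Phi(y)$, where $M=|\A|+\sum_{k'=1}^{n}Z_{1,k'}$ and $\Phi$ is a rational function of the $y$-variables. Now consider the $(n-1)\times n$ Jacobian of $(J_2,\dots,J_{n-1},I_1)$ with respect to $(y_1,\dots,y_{n-1},x_k)$, and extract the $(n-1)\times(n-1)$ submatrix whose columns are indexed by $(y_2,\dots,y_{n-1},x_k)$. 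Since each $J_i$ depends only on $y_i$ and $y_{p(i)}$ with $p(i)<i$, we have $\partial J_i/\partial y_j=0$ whenever $j>i$, and $\partial J_i/\partial x_k=0$. Hence the submatrix is lower triangular with diagonal entries
\[
\partial J_2/\partial y_2,\ \dots,\ \partial J_{n-1}/\partial y_{n-1},\ \partial I_1/\partial x_k.
\]
The first $n-2$ of these diagonal entries are rational expressions in the parameters that are visibly nonzero from the explicit formulas \eqref{rat1}--\eqref{rat3} (each factors through the nonvanishing differences $c_\ell-a_{u_\ell}$, $b_\ell-a_{v_\ell}$ of the hypothesis below \eqref{ei2}); the final entry equals $M\,x_k^{M-1}\Phi(y)$, which is nonzero provided the scalar $M$ does not vanish. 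Once this is established, the submatrix has nonzero determinant for generic parameters, so the $n-1$ integrals are functionally independent.

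The main obstacle is the verification that $M\neq 0$ for generic parameters, since $M=|\A|+\sum_{k'}Z_{1,k'}$ is a fairly intricate polynomial in the $3n-2$ entries of $\A$. The cleanest route is to observe that $M$ is a polynomial identity in the parameters, hence either vanishes identically or only on a proper algebraic subvariety; a single numerical evaluation (such as the one already implicit in the example following \eqref{ei2}, where the exponent of the distinguished $x_k$ can be read off directly from the displayed integral) shows that $M$ does not vanish identically, so it is nonzero generically. A minor subsidiary point is checking that $\partial J_i/\partial y_i$ is nonzero for generic parameters; this follows at once by differentiating the explicit product expressions in Corollary \ref{TPI} and using the generic-parameter assumption. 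Combining these, the Jacobian has rank $n-1$, proving that the tree-system \eqref{Tsy} is superintegrable.
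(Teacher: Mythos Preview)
Your approach is essentially the same as the paper's: build a nested sequence of subtrees (equivalently, a rooted spanning tree of the line graph $L(T)$), attach one three-point integral from Corollary~\ref{TPI} to each new edge, adjoin a single DP-based integral from \eqref{ei1}, and observe that in edge-plus-distinguished-vertex coordinates the Jacobian is triangular. The paper's version is terser and does not spell out the nonvanishing of the pivot entries (in particular your exponent $M=|\A|+\sum_{k'}Z_{1,k'}$), so your write-up is in fact slightly more complete, but the structure is identical.
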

\begin{proof}
Pick an edge of $T$, call it $\bar{e}_{n-1}$. Let $T_{n-1}\subset T$ be the subtree on 2 vertices with edge $\bar{e}_{n-1}$.
Choose an edge $\bar{e}_{n-2}\in T$ which is adjacent to $T_{n-1}$. Let $T_{n-1}\subset T$ be the subtree on 3 vertices with edges $\bar{e}_{n-1},\bar{e}_{n-2}$. Continue the process, i.e. choose edges $\bar{e}_{n-i}\in T$ adjacent to $T_{n-i+1}$, and let
$T_{n-i}\subset T$ be the tree on $i+1$ vertices with edges $\bar{e}_{n-1},\ldots,\bar{e}_{n-i}$, for $i=3,\ldots,n-1$, so that $T_1=T$.
Now consider the set $K=\{K_1,K_2,\ldots,K_{n-1}\}$, where $K_1$ is any integral of the form \eqref{ei1}, and $K_i$, $i>1$, is the 3-point integral which depends on edge-variable $\bar{y}_{i-1}$ and the edge-variable of the edge it is adjacent to in $T_i$. The integral $K_1$ can be written in terms of a distinguished variable $x_k$ and the edge-variables $\bar{y}_1,\ldots,\bar{y}_{n-1}$. By construction, the Jacobian matrix of $K$ with respect to these variables is `upper-triangular' and hence the $n-1$ integrals are functionally independent.
\end{proof}

\begin{example}
For the tree-system \eqref{tsys}, if we choose
\begin{align*}
\bar{e}_7=(3,7),\quad
&\bar{e}_6=(2,3),\quad
\bar{e}_5=(3,4),\quad
\bar{e}_4=(2,6),\\
\bar{e}_3=(5,6),\quad
&\bar{e}_2=(2,8),\quad
\bar{e}_1=(1,2),\quad
\end{align*}
the integrals $K_2,\ldots,K_7$ correspond to \eqref{ints} and the Jacobian matrix takes the form
\[
\begin{pmatrix}
\ast & \ast & \ast & \ast & \ast & \ast & \ast & \ast \\
0 & \ast & \ast & 0 & 0 & 0 & 0 & 0 \\
0 & 0 & \ast & 0 & \ast & 0 & 0 & 0 \\
0 & 0 & 0 & \ast & \ast & 0 & 0 & 0 \\
0 & 0 & 0 & 0 & \ast & 0 & \ast & 0 \\
0 & 0 & 0 & 0 & 0 & \ast & \ast & 0 \\
0 & 0 & 0 & 0 & 0 & 0 & \ast & \ast
\end{pmatrix}.
\]
\end{example}

\vspace{5mm}
\noindent
{\bf Acknowledgement}

We are grateful to our colleague Ian Marquette for pointing us to reference \cite{Maier}, and to two anonymous referees for several useful suggestions.

\appendix

\section{The rank of $\Z$ when $|\A|=0$.}
The following theorem implies that for special values of the parameters where $|\A|=0$ only 1 of the $n-1$ integrals is functionally independent. \begin{theorem}\label{zhro}
Let $\A$ be the $n\times n$ adjacency matrix of the associated weighted complete digraph to a tree on $n$ vertices, as in Definition \ref{dacdg}, and let $\B$ be the $(n-1)\times n$ matrix of coefficients of the cofactors of the additional Darboux polynomials \eqref{Pi} of the corresponding Lotka-Volterra tree-system. If $|\A|=0$, then the matrix $\Z=-\B.\adj{\A}$ has rank 1.
\end{theorem}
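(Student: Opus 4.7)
The plan is to exploit the standard fact that, when $|\A|=0$ with $\A$ of rank $n-1$, the matrix $\adj{\A}$ has rank exactly $1$ and admits an outer-product factorisation, then combine this with the structural relation between $\A$ and $\B$ to pin down the rank.

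First I would establish rank at most $1$. The identity $\A\cdot\adj{\A}=|\A|\I=0$ shows every column of $\adj{\A}$ lies in $\ker\A$. Under the generic assumption on the $3n-2$ parameters, constrained to the hypersurface $|\A|=0$, $\A$ has rank $n-1$, so $\ker\A$ is one-dimensional and spanned by some nonzero vector $\mathbf{u}$. Hence $\adj{\A}=\mathbf{u}\mathbf{v}^t$ for some vector $\mathbf{v}$, and therefore
\[
\Z=-\B\cdot\adj{\A}=-(\B\mathbf{u})\mathbf{v}^t
\]
has rank at most $1$.

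Second, I would show $\Z\neq 0$ by proving $\B\mathbf{u}\neq\mathbf{0}$. Suppose, for contradiction, that $\B\mathbf{u}=\mathbf{0}$. Using the identities recorded in the proof of Proposition \ref{FLCM},
\[
B_{i,j}-A_{u_i,j}=(a_{v_i}-b_i)\delta_{j,v_i},\qquad B_{i,j}-A_{v_i,j}=(a_{u_i}-c_i)\delta_{j,u_i},
\]
and the fact that $\A\mathbf{u}=\mathbf{0}$, I would compute, for each edge $e_i=(u_i,v_i)$,
\[
0=(\B_i-\A_{u_i})\cdot\mathbf{u}=(a_{v_i}-b_i)\,u_{v_i},\qquad 0=(\B_i-\A_{v_i})\cdot\mathbf{u}=(a_{u_i}-c_i)\,u_{u_i}.
\]
By genericity $a_{v_i}\neq b_i$ and $a_{u_i}\neq c_i$, so $u_{u_i}=u_{v_i}=0$ for every edge. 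Since $T$ is connected and every vertex is an endpoint of at least one edge, $\mathbf{u}=\mathbf{0}$, contradicting $\mathbf{u}\neq\mathbf{0}$. Hence $\B\mathbf{u}\neq\mathbf{0}$, and $\Z$ has rank exactly $1$.

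The only mildly subtle point, and the one I expect to need the most care, is the implicit claim that along the hypersurface $|\A|=0$ generic parameters still give $\A$ of rank $n-1$ rather than rank $\leq n-2$. This is a standard codimension argument: the locus where the rank drops below $n-1$ is the common vanishing set of all $(n-1)\times(n-1)$ minors of $\A$, which forms a proper subvariety of $\{|\A|=0\}$; once one exhibits a single point in $\{|\A|=0\}$ at which some such minor is nonzero (easy from the sparse form of $\A$), genericity does the rest.
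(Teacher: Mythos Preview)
Your proof is correct and takes a genuinely different, more conceptual route than the paper's. The paper establishes $\operatorname{rank}\Z\le 1$ by direct computation: it shows that every $2\times 2$ minor $Y_{i,k}=Z_{i,k}Z_{1,1}-Z_{i,1}Z_{1,k}$ is divisible by $|\A|$, using Proposition~\ref{FLCM} to express each $Z$-entry as a signed minor, Propositions~\ref{ROM} and~\ref{FAE} to switch between the two alternative expressions, and Sylvester's identity~\eqref{Syl} to factor the result; tracking which formula applies where forces a nine-branch case analysis. You sidestep all of this with the single linear-algebra observation that every column of $\adj{\A}$ lies in the one-dimensional kernel of $\A$, so $\adj{\A}$ itself is rank~$\le 1$ and the rank bound on $\Z$ is immediate. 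Your lower-bound argument ($\B\mathbf{u}\neq\mathbf{0}$ via the row identities $B_{i,\cdot}-A_{u_i,\cdot}$ and $B_{i,\cdot}-A_{v_i,\cdot}$ and connectedness of $T$) is also more explicit than the paper, which never separately verifies $\Z\neq 0$ and relies implicitly on generic nonvanishing of the minors appearing in Proposition~\ref{FLCM}. The genericity caveat you flag---that on the hypersurface $|\A|=0$ one needs $\operatorname{rank}\A=n-1$ rather than lower---is equally present (and equally unaddressed) in the paper's argument; you are right that it is handled by a routine codimension count.
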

To prove Theorem \ref{zhro} we need to consider the missing cases in Proposition \ref{ROM}, i.e., where $w\in\{u_i,v_i\}$.
\begin{proposition} \label{FAE}
For all edges $e_i=(u_i,v_i)$ and $w\in\{u_i,v_i\}$ we have
\[
(a_{u_i}-c_i)|\A^{w;u_i}|-(-1)^{u_i+v_i}(a_{v_i}-b_i)|\A^{w;v_i}|=\begin{cases}
|\A| & \text{ if } w=u_i\\
-(-1)^{u_i+v_i}|\A| & \text{ if } w=v_i.
\end{cases}
\]
\end{proposition}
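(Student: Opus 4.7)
The plan is to use Corollary \ref{cor1}: rows $u_i$ and $v_i$ of $\A$ agree in every column $w\notin\{u_i,v_i\}$. Therefore the difference $R_{u_i}-R_{v_i}$ is a row with just two non-zero entries, namely $a_{u_i}-c_i$ in column $u_i$ and $b_i-a_{v_i}$ in column $v_i$. Because replacing one row by itself minus another row preserves the determinant, I get a two-term Laplace expansion for $|\A|$ directly from this observation, with no tree-level combinatorics required.

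For the case $w=u_i$, I replace row $u_i$ by $R_{u_i}-R_{v_i}$ and expand $|\A|$ along the new row $u_i$. Since row $u_i$ is itself deleted in both minors that appear, the modification has no effect on them and they are simply $|\A^{u_i;u_i}|$ and $|\A^{u_i;v_i}|$. Collecting the Laplace cofactor signs $(-1)^{u_i+u_i}=1$ and $(-1)^{u_i+v_i}$ together with $b_i-a_{v_i}=-(a_{v_i}-b_i)$ reproduces the right-hand side of the proposition verbatim.

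For $w=v_i$, I perform the symmetric operation $R_{v_i}\to R_{v_i}-R_{u_i}$, whose only non-zero entries are $c_i-a_{u_i}=-(a_{u_i}-c_i)$ in column $u_i$ and $a_{v_i}-b_i$ in column $v_i$. Expanding $|\A|$ along this row gives an expression of the form $-(-1)^{u_i+v_i}(a_{u_i}-c_i)|\A^{v_i;u_i}|+(a_{v_i}-b_i)|\A^{v_i;v_i}|$, and multiplying through by $-(-1)^{u_i+v_i}$ (using $((-1)^{u_i+v_i})^2=1$) yields the stated identity. I do not foresee a real obstacle; the only care needed is sign bookkeeping, and the extra factor $-(-1)^{u_i+v_i}$ that appears on the right-hand side in the $w=v_i$ case is precisely what arises from interchanging the roles of rows $u_i$ and $v_i$ in the Laplace expansion.
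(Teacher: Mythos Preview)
Your proof is correct and is essentially the same as the paper's: both exploit Corollary~\ref{cor1} to reduce the Laplace expansion along row $u_i$ (resp.\ $v_i$) to two terms. The only cosmetic difference is that the paper subtracts the expansion of the singular matrix $\bar{\A}$ (with row $u_i$ replaced by row $v_i$) from that of $\A$, whereas you perform the row operation $R_{u_i}\to R_{u_i}-R_{v_i}$ before expanding; these are the same manipulation.
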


\begin{proof}
We first prove the case $w=u_i$. Let $\bar{\A}$ be the matrix obtained from $\A$ by replacing the $u_i$-th row by the $v_i$-th row.
By expanding in this row, we find, as $\A_{v_i,j}=\A_{u_i,j}$ when $j\not\in\{u_i,v_i\}$,
\begin{equation}\label{bAiu}
0=|\bar{\A}|=\left(\sum_{j\not\in\{u_i,v_i\}}(-1)^{u_i+j}A_{u_i,j}|\A^{u_i;j}|\right) +c_i|\A^{u_i;u_i}|+(-1)^{u_i+v_i}a_{v_i}|\A^{u_i;v_i}|.
\end{equation}
Expanding $|\A|$ in the $u_i$-th row gives, using (\ref{bAiu}), 
\begin{align*}
|\A|&=\left(\sum_{j\not\in\{u_i,v_i\}}(-1)^{u_i+j}A_{u_i,j}|\A^{u_i;j}|\right)
+a_{u_i}|\A^{u_i;u_i}| + (-1)^{u_i+v_i} b_i |\A^{u_i;v_i}|\\
&=(a_{u_i}-c_i)|\A^{u_i;u_i}|+ (-1)^{u_i+v_i} (b_i - a_{v_i})|\A^{u_i;v_i}|.
\end{align*}
The proof of the case $w=v_i$ is similar. Let $\bar{\A}$ be the matrix obtained from $\A$ by replacing the $v_i$-th row by the $u_i$-th row.
By expanding in this row, we find, as $\A_{u_i,j}=\A_{v_i,j}$ when $j\not\in\{u_i,v_i\}$,
\begin{equation}\label{bAiv}
0=|\bar{\A}|=\left(\sum_{j\not\in\{u_i,v_i\}}(-1)^{v_i+j}A_{v_i,j}|\A^{v_i;j}|\right) + (-1)^{u_i+v_i}a_{u_i}|\A^{v_i;u_i}|+b_i|\A^{v_i;v_i}|.
\end{equation}
Expanding $|\A|$ in the $v_i$-th row gives, using (\ref{bAiv}), 
\begin{align*}
|\A|&=\left(\sum_{j\not\in\{u_i,v_i\}}(-1)^{v_i+j}A_{v_i,j}|\A^{v_i;j}|\right) + (-1)^{u_i+v_i}c_i|\A^{v_i;u_i}|+a_{v_i}|\A^{v_i;v_i}|\\
&=(-1)^{u_i+v_i}(c_i-a_{u_i})|\A^{v_i;u_i}|+(a_{v_i}-b_i)|\A^{v_i;v_i}|.
\end{align*}
\end{proof}

The proof of Theorem \ref{zhro} also relies on a special case of Sylvester's determinental identity, i.e. that, for any Matrix $\A$, with $i<j,k<l$,
\begin{equation} \label{Syl}
|\A^{i;k}||\A^{j;l}|-|\A^{i;l}||\A^{j;k}|
=|\A||\A^{i,j;k,l}|.
\end{equation}
We note that \eqref{Syl} also holds if $i>j,k>l$, and that, for $i<j,k>l$ or $i>j,k<l$,
\begin{equation} \label{Syli}
|\A^{i;k}||\A^{j;l}|-|\A^{i;l}||\A^{j;k}|
=-|\A||\A^{i,j;k,l}|.
\end{equation}
Of course, when $i=j$ or $k=l$ the expression $|\A^{i;k}||\A^{j;l}|-|\A^{i;l}||\A^{j;k}|$ is identically 0.

\begin{proof}[Proof of Theorem \ref{zhro}.]
The matrix $\Z$ has rank 1 if each row is a multiple of the first one. We will prove that
\[
|\A|=0 \Rightarrow \frac{Z_{i,k}}{Z_{1,k}}=\frac{Z_{i,1}}{Z_{1,1}},
\]
by showing that $|\A|$ is a divisor of $Y_{i,k}:=Z_{i,k}Z_{1,1}-Z_{i,1}Z_{1,k}$. 
To evaluate $Y_{i,k}$ we will use Proposition \ref{FLCM}. Hence we need to distinguish cases (for example, to evaluate $Z_{i,1}$ we need to know that $u_i\neq 1$ or $v_i\neq 1$), as in Figure \ref{cases}.

\begin{figure}[h]
\begin{center}
\scalebox{.9}{\begin{tikzpicture}
    \node (1) at (0,5) {$\bullet$};
	\node (2) at (0,4) {$v_1\neq 1$};
	\node (3) at (0,3) {$v_i\neq 1$};
	\node (4) at (0,2) {$v_i\neq k$};
	\node (5) at (0,1) {$v_1\neq k$};
    \node (6) at (2,5) {$v_1=1$};
	\node (7) at (2,4) {$v_i=1$};
	\node (8) at (2,3) {$v_i=k$};
	\node (9) at (2,2) {$v_1=k$};
	\node (10) at (2,1) {$u_1\neq k$};
	\node (11) at (4,5) {$u_1\neq 1$};
	\node (12) at (4,4) {$u_i\neq 1$};
	\node (13) at (4,3) {$u_i\neq k$};
	\node (14) at (4,2) {$v_1\neq k$};
	\node (15) at (6,5) {$v_1\neq k$};
	\node (16) at (6,4) {$v_i\neq k$};
	\node (17) at (6,3) {$v_1=k$};
	\node (18) at (6,2) {$u_1\neq k$};
    \node (19) at (6,6) {$u_i\neq 1$};
    \node (20) at (6,7) {$v_i\neq k$};
    \node (21) at (8,6) {$v_i=k$};
    \node (22) at (10,6) {$u_i\neq k$};
    \node (23) at (8,5) {$u_i=1$};
    \node (24) at (10,5) {$v_i\neq 1$};
    \node (25) at (12,5) {$u_i\neq k$};
    \node (26) at (8,4) {$v_1=k$};
    \node (27) at (10,4) {$u_1\neq k$};
    \node (28) at (8,3) {$v_1\neq k$};
	\draw[->] (1) -- (2);
    \draw[->] (2) -- (3);
    \draw[->] (3) -- (4);
    \draw[->] (4) -- (5);
    \draw[->] (1) -- (6);
    \draw[->] (2) -- (7);
    \draw[->] (3) -- (8);
    \draw[->] (4) -- (9);
    \draw[->] (9) -- (10);
    \draw[->] (6) -- (11);
    \draw[->] (7) -- (12);
    \draw[->] (8) -- (13);
    \draw[->] (13) -- (14);
    \draw[->] (11) -- (15);
    \draw[->] (12) -- (16);
    \draw[->] (13) -- (17);
    \draw[->] (17) -- (18);
    \draw[->] (15) -- (19);
    \draw[->] (19) -- (20);
    \draw[->] (19) -- (21);
    \draw[->] (21) -- (22);
    \draw[->] (15) -- (23);    
    \draw[->] (23) -- (24);
    \draw[->] (24) -- (25);
    \draw[->] (16) -- (26);
    \draw[->] (26) -- (27);
    \draw[->] (16) -- (28);
    \node[shape=circle,draw=black] at (0,0) {$i$};
    \node[shape=circle,draw=black] at (2,0) {$ii$};      
    \node[shape=circle,draw=black] at (4,1) {$iii$};      
    \node[shape=circle,draw=black] at (6,1) {$iv$};      
    \node[shape=circle,draw=black] at (8,2) {$v$};      
    \node[shape=circle,draw=black] at (12,4) {$vi$};      
    \node[shape=circle,draw=black] at (14,5) {$vii$};      
    \node[shape=circle,draw=black] at (12,6) {$viii$};            
    \node[shape=circle,draw=black] at (8,7) {$ix$};      
    \end{tikzpicture}}
\caption{\label{cases} Nine cases to analyse.}
\end{center}
\end{figure}
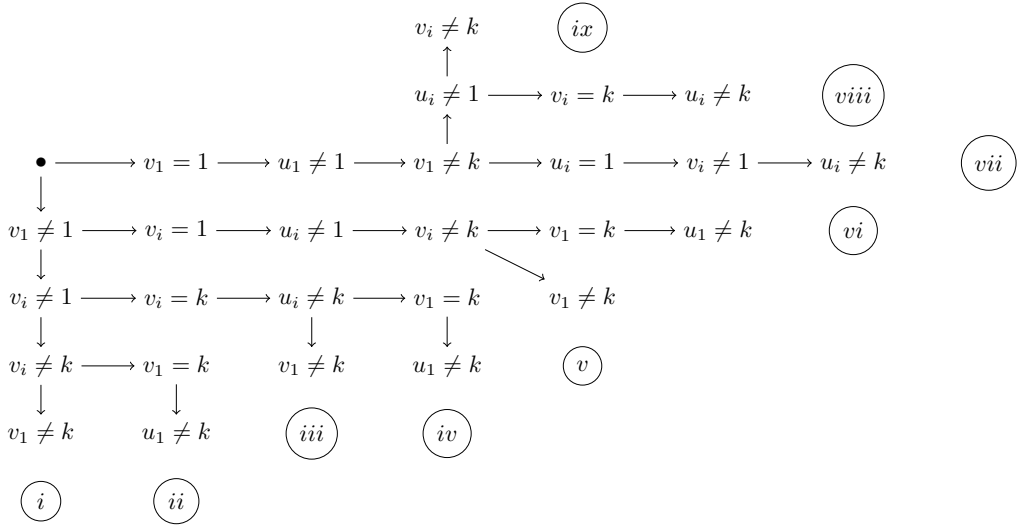
\begin{itemize}
\item Case $i$: $v_1\neq 1,v_i\neq 1,v_i\neq k,v_1\neq k$. Using Proposition \ref{FLCM}, and the identities (\ref{Syl}), (\ref{Syli}), we find, for $u_1\neq u_i$,
    \begin{align*}
    Y_{i,k}&=(-1)^{1+k+u_1+u_i}(c_1-a_{u_1})(c_i-a_{u_i})(|\A^{k;u_i}||\A^{1;u_1}|-|\A^{1;u_i}||\A^{k;u_1}|)\\
    &=(-1)^{1+k+u_1+u_i}(c_1-a_{u_1})(c_i-a_{u_i})(\pm|\A||\A^{1,k;u_1,u_i}|).
    \end{align*}
    If $u_1=u_i$ then $Y_{i,k}$ vanishes. 
\item Case $ii$: $v_1\neq 1,v_i\neq 1,v_i\neq k,u_1\neq k$. We find
    \[
    Y_{i,k}=(c_i-a_{u_i})\left((-1)^{1+k+u_1+u_i}((c_1-a_{u_1})|\A^{k;u_i}||\A^{1;u_1}|-(-1)^{1+k+v_1+u_i}(b_1-a_{v_1})|\A^{1;u_i}||\A^{k;v_1}|\right).
    \]
    When $u_1\neq 1$ we can use Proposition \ref{ROM} to show that this equals
    \[
    Y_{i,k}=(-1)^{1+k+v_1+u_i}(b_1-a_{v_1})(c_i-a_{u_i})(|\A^{k;u_i}||\A^{1;v_1}|-|\A^{1;u_i}||\A^{k;v_1}|).
    \]
    which has $|\A|$ as divisor. When $u_1=1$ we use Proposition \ref{FAE} to show that modulo $|\A|$ we get the same expression
    \[
    Y_{i,k}\equiv(-1)^{1+k+v_1+u_i}(b_1-a_{v_1})(c_i-a_{u_i})(|\A^{k;u_i}||\A^{1;v_1}|-|\A^{1;u_i}||\A^{k;v_1}|).
    \]
\item Cases $iii,\ldots,ix$. All remaining cases can be proven similar to the above. E.g., for case $iii$, $v_1\neq 1,v_i\neq 1,u_i\neq k,v_1\neq k$, we find
    \begin{align*}
    Y_{i,k}&=(c_1-a_{u_1})\left((-1)^{1+k+u_1+v_i}(b_i-a_{v_i})|\A^{1;u_1}||\A^{k;v_i}|\right.\\
    & \ \ \ \left.-(-1)^{1+k+u_1+u_i}(c_i-a_{u_i})|\A^{k;u_1}||\A^{1;u_i}|\right)\\
    &\equiv (-1)^{1+k+u_1+v_i}(c_1-a_{u_1})(b_i-a_{v_i})(|\A^{k;v_i}||\A^{1;u_1}|-|\A^{1;v_i}||\A^{k;u_1}|).
    \end{align*}
    For the cases $ix,xi,xii$ and $ix$ one needs to use Propositions \ref{ROM} and/or \ref{FAE} twice.
\end{itemize}
\end{proof}

\noindent
{\bf Competing Interests and Data availability statements}

\noindent
On behalf of all authors, the corresponding author states that there is no conflict of interest and data sharing is not applicable to this article as no datasets were generated or analysed during the current study.


\begin{thebibliography}{10}
\bibitem{Bog}
O.I. Bogoyavlenskij, Integrable Lotka-Volterra Systems, Regular and Chaotic Dynamics 13(6) (2008) 543-556.

\bibitem{CD}
Y.T. Christodoulides, P.A. Damianou, Darboux polynomials for Lotka-Volterra systems in three dimensions, J. Nonl. Math. Phys. 16(3) (2009) 339-354.


\bibitem{Dam}
P.A. Damianou. Lotka-Volterra systems associated with graphs. In Group analysis of differential equations and integrable systems, pages 30-44. Department of Mathematics and Statistics, University of Cyprus, Nicosia, 2013.

\bibitem{Darboux}
Darboux, M\'{e}moire sur les \'{e}quations diff\'{e}rentielles alg\'{e}briques du premier ordre et du premier degr\'{e}, Bull. Sci. Math. 1 (1878) 60-96, 123-144, 151-200.

\bibitem{Goriely}
A. Goriely, Integrability and Nonintegrability of Dynamical Systems (2001) World Scientific, section 2.5.

\bibitem{HolPic}
R.D. Holt and J. Pickering, Infectious disease and species coexistence: A model of Lotka-Volterra form, The American Naturalist 126 (1985), 196-211.

\bibitem{Kolm}
A. Kolmogorov, Sulla teoria di Volterra della lotta per l'essistenza, Giornale Istituto Ital. Attuari 7 (1936) 74-80. 

\bibitem{KQV}
T.E. Kouloukas, G.R.W. Quispel and P. Vanhaecke, Liouville integrability and superintegrability of a generalized Lotka-Volterra system and its Kahan discretization, J. Phys. A 49 (2016) 13pp.

\bibitem{Lli}
J. Llibre, private communication.

\bibitem{LRR}
J. Llibre, R. Ram\'irez, V. Ram\'irez, 
Integrability of a class of $N$–dimensional
Lotka–Volterra and Kolmogorov systems, J. Differential Equations 269 (2020) 2503-2531.

\bibitem{Lot}
A.J. Lotka, Analytical Theory of Biological Populations, The Plenum Series on Demographic Methods and Population Analysis, Plenum Press, New York, 1998. Translated from the 1939 French edition and with an introduction by David P. Smith and H\'el\`ene Rossert.

\bibitem{Maier}
R.S. Maier, The integration of three-dimensional Lotka-Volterra systems, Proc. Roy. Soc. A 469 (2013) 20120693.

\bibitem{May}
R.M. May, W.J. Leonard, Nonlinear Aspects of Competition Between Three Species, Siam J. Appl. Math. 29(2) (1975) 243-253.

\bibitem{OEIS}
OEIS Foundation Inc. (2024), The On-Line Encyclopedia of Integer Sequences, Published electronically at https://oeis.org.

\bibitem{Otter}
R. Otter, The Number of Trees, The Annals of Mathematics 2nd ser. 49(3) (1948) 583-599. 

\bibitem{Pla}
M. Plank, Hamiltonian structures for the $n$-dimensional Lotka-Volterra equations, J Math Phys. 36 (1995) 3520-34.

\bibitem{QTMK}
G.R.W. Quispel, B. Tapley, D.I. McLaren, P.H. van der Kamp, Linear Darboux polynomials for Lotka-Volterra systems, trees and superintegrable families, J. Phys. A: Math. Theor. 56 (2023) 315201.

\bibitem{Smale}
S. Smale, On the differential equations of species in competition, J. Math. Biology 3 (1976) 5-7.

\bibitem{ProcA}
P.H. van der Kamp, T.E. Kouloukas, G.R.W. Quispel, D.T. Tran and P. Vanhaecke, Integrable and superintegrable systems associated with multi-sums of products, Proc. R. Soc. A 470 (2014) 20140481.

\bibitem{KMQ}
P.H. van der Kamp, D.I. McLaren and G.R.W. Quispel, On a quadratic Poisson algebra, and superintegrable, Liouville integrable, and nonholonomically integrable Lotka-Volterra systems, in preparation.

\bibitem{Vol}
V. Volterra, Le\c{c}ons sur la th\'eorie math\'ematique de la lutte pour la vie, Les Grands Classiques Gauthier-Villars. Editions Jacques Gabay, Sceaux, 1990. Reprint of the 1931 original.

\bibitem{Zha}
X. Zhang, Integrability of Dynamical
Systems: Algebra and Analysis, 2017, Springer

\end{thebibliography}
\end{document}